\documentclass[11pt]{article}

\usepackage[utf8]{inputenc} 
\usepackage[T1]{fontenc}    
\usepackage{url}            
\usepackage{booktabs}       
\usepackage{amsfonts}       
\usepackage{nicefrac}       
\usepackage{microtype}      
\usepackage{xcolor}         
\usepackage{booktabs}
\usepackage{float}
\usepackage{tabularx}
\usepackage{array}
\usepackage{comment}
\usepackage{natbib}
\usepackage[margin = 1in]{geometry}

\usepackage[table]{xcolor}

\usepackage{amsmath}        
\usepackage{amssymb}        
\usepackage{amsthm}         
\usepackage{mathtools}      
\usepackage{bm}             
\usepackage{algorithm}
\usepackage{algpseudocode}

\usepackage{subcaption}

\usepackage{placeins}
\usepackage{float}
\usepackage{enumitem}
\usepackage{tikz,lipsum,lmodern}
\usepackage[most]{tcolorbox}

\theoremstyle{plain}
\newtheorem{theorem}{Theorem}

\newtheorem{proposition}[theorem]{Proposition}

\theoremstyle{definition}

\theoremstyle{remark}

\newcommand{\poubelle}[1]{}

\title{Directionally Correct Views in Portfolio Optimization}

\author{
\bfseries Ojas Kumar$^1$ \ \ \ Sreejit Dey$^1$ \ \ \ Sarvesh R. Iyer$^1$ \ \ \ Sandeep Juneja$^1$ \\ Ashoka University \footnote{\{ojas.kumar / sreejit.dey / sarvesh.iyer / sandeep.juneja\}@ashoka.edu.in} 
}

\renewcommand{\addcontentsline}[3]{}

\usepackage{todonotes}
\usepackage{titlesec}
\begin{document}

\maketitle

\begin{abstract}
Portfolio optimization techniques often use historical data to estimate parameters for Markowitz-type or ML models. Effective allocation requires a way to align the available data with conditions expected over the investment horizon. Inspired by the Black–Litterman framework, we ask whether incorporating expert views leads to better portfolio allocations. In a simple Markowitz setting, we show that even a small amount of directionally correct information can materially improve investment decisions. We also discuss how expert views can be incorporated in end-to-end portfolio models using entropy-based ideas. Somewhat surprisingly, our empirical results indicate that simpler Markowitz-type model perform as well as an end-to-end model, once directionally correct experts are  incorporated.
\end{abstract}

\section{Introduction}

Portfolio design is a historically rich field, combining simple financial insights with complex mathematical models. A seminal contribution to this field was made by Markowitz \cite{Markowitz1952portfolio}, who first introduced a quantitative framework in financial investment eventually leading to the Capital Asset Pricing Model (CAPM) \cite{sharpe1964capital,lintner1965valuation,mossin1966equilibrium}. A major drawback of the theory lay in the inability to predict the mean $\boldsymbol{\mu}$ and covariance $\boldsymbol{\Sigma}$ of future asset returns, and the extreme sensitivity of the optimal allocation to any estimation errors \cite{best1991sensitivity,chopra1993effect, jorion1986bayes,ledoit2004honey,choueifaty2008toward,maillard2010properties, jagannathan2003risk}.

Black and Litterman \cite{black1992global} in their pioneering work developed a Bayesian framework to arrive at forward looking mean--variance estimates by incorporating qualitative \emph{expert views} into a quantitative framework. This work inspired many subsequent works in portfolio optimization as well as in option pricing, e.g.,  \cite{buchen1996maximum, avellaneda2001weighted, stutzer1996simple, meucci2005beyond, meucci2008fully, pezier2007global, qian2001conditional, simonian2014copula}. Precisely what quantitative factors determine the effectiveness of expert views is a question of importance. More recently, data-driven approaches have sought to avoid estimating
\(\boldsymbol{\mu}\) and \(\boldsymbol{\Sigma}\) \cite{zhang2020deep,zhang2021universal}, raising the question of how forward-looking expert views can be incorporated directly into an end-to-end model.

\textbf{Overview.} In this article, we provide quantitative estimates showing that even a
modest amount of \textit{directionally correct} information in expert views can substantially improve
portfolio allocations, and we develop a framework for incorporating such views into both
Markowitz and end-to-end models. \S~\ref{sec:srunderev} shows that the Sharpe ratio increases
monotonically, and is concave near the optimum, as the directionally correct view
strengthens, and \S~\ref{expmare} shows this holds robustly under small errors in the view.
\S~\ref{sec:end-to-end} extends this notion to an end-to-end, ML-based portfolio model and
finds empirically that it performs comparably to Markowitz once directionally correct views
are incorporated. We use the Sharpe ratio throughout as our measure of portfolio performance
\cite{sharpe1966mutual,sharpe1994ratio}.

\section{Sharpe Ratio under Directionally Correct Views}\label{sec:srunderev}

Under the standard Markowitz framework, we consider $n$ assets with random return vector $\mathbf r=(r_1,\ldots,r_n)^\top$. Its mean and covariance are denoted by
$\boldsymbol{\mu}:=\mathbb E[\mathbf r]$, and $\boldsymbol{\Sigma}:=\operatorname{Cov}(\mathbf r)$.
A portfolio $\mathbf w\in\mathbb R^n$ has return
$r_{\mathbf w}=\mathbf w^\top\mathbf r$, with
$
    \mathbb E[r_{\mathbf w}]
    =\mathbf w^\top\boldsymbol{\mu},$ and $\operatorname{Var}(r_{\mathbf w})
    =\mathbf w^\top\boldsymbol{\Sigma}\mathbf w$. We measure the performance of a portfolio by the widely-used Sharpe ratio,
\begin{equation}\label{eq:SR}
    \operatorname{SR}(\mathbf w)
    =
    \frac{
        \mathbf w^\top\boldsymbol{\mu}-r_f
    }{
        \sqrt{\mathbf w^\top\boldsymbol{\Sigma}\mathbf w}
    }.
\end{equation}
Here, $r_f$ is the risk-free rate, which we will assume to be $0$ without loss of generality. We refer to the portfolio maximizing \eqref{eq:SR} as the
\emph{maximum-Sharpe portfolio.} When short-selling is allowed and $\boldsymbol{\Sigma}$ is invertible, the maximum-Sharpe portfolio has
weights 
$\mathbf w^\ast
    \propto
    \boldsymbol{\Sigma}^{-1}\boldsymbol{\mu}$, where $\mathbf 1$ denotes the vector of ones; see \citep{merton1972analytic}.

Suppose an expert's prediction of the future mean and covariance are $\boldsymbol{\nu}$ and $\boldsymbol{\Omega}\succ0$ respectively. To determine how this portfolio performs under the distribution that actually governs returns, we calculate the realized Sharpe ratio and maximum possible Sharpe ratio as follows:
\begin{equation}
\label{eq:misspecified-sharpe}
    \operatorname{S}_{\boldsymbol{\mu},\boldsymbol{\Sigma}}
    (\boldsymbol{\nu},\boldsymbol{\Omega})
    :=
    \frac{
        \boldsymbol{\mu}^{\top}
        \boldsymbol{\Omega}^{-1}\boldsymbol{\nu}
    }{
        \sqrt{
            \boldsymbol{\nu}^{\top}
            \boldsymbol{\Omega}^{-1}
            \boldsymbol{\Sigma}
            \boldsymbol{\Omega}^{-1}
            \boldsymbol{\nu}
        }
    }, \quad   S^\star_{\boldsymbol{\mu},\boldsymbol{\Sigma}}
    := \operatorname{S}_{\boldsymbol{\mu},\boldsymbol{\Sigma}}
    (\boldsymbol{\mu},\boldsymbol{\Sigma}) = 
    \sqrt{
        \boldsymbol{\mu}^{\top}
        \boldsymbol{\Sigma}^{-1}
        \boldsymbol{\mu}
    }.
\end{equation}

\begin{proposition}[Sharpe loss as an angle]
\label{prop:sharpe-angle}
Assume $\boldsymbol{\mu},\boldsymbol{\nu}\neq0$ and
$\boldsymbol{\Sigma},\boldsymbol{\Omega}\succ0$. There exists a
unique angle $\theta_{\boldsymbol{\mu},\boldsymbol{\Sigma}}
(\boldsymbol{\nu},\boldsymbol{\Omega})\in[0,\pi]$
such that
$ 
\cos
\theta_{\boldsymbol{\mu},\boldsymbol{\Sigma}}
(\boldsymbol{\nu},\boldsymbol{\Omega})
=
\frac{
    \boldsymbol{\mu}^{\top}\boldsymbol{\Omega}^{-1}\boldsymbol{\nu}
}{
    \sqrt{
        \boldsymbol{\mu}^{\top}
        \boldsymbol{\Sigma}^{-1}\boldsymbol{\mu}
    }
    \sqrt{
        \boldsymbol{\nu}^{\top}
        \boldsymbol{\Omega}^{-1}
        \boldsymbol{\Sigma}
        \boldsymbol{\Omega}^{-1}\boldsymbol{\nu}
    }
}.
$ 
Moreover, we have
$ 
S_{\boldsymbol{\mu},\boldsymbol{\Sigma}}
(\boldsymbol{\nu},\boldsymbol{\Omega})
=
S^\star_{\boldsymbol{\mu},\boldsymbol{\Sigma}}
\cos
\theta_{\boldsymbol{\mu},\boldsymbol{\Sigma}}
(\boldsymbol{\nu},\boldsymbol{\Omega}),
$ so the Sharpe ratio is maximized when $\boldsymbol{\Omega}^{-1}\boldsymbol{\nu}
=
c\,\boldsymbol{\Sigma}^{-1}\boldsymbol{\mu}$
for some $c>0$.
\end{proposition}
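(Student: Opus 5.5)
The plan is to recast everything as a Cauchy--Schwarz statement in the inner product induced by $\boldsymbol{\Sigma}$. Set $\mathbf a:=\boldsymbol{\Sigma}^{-1}\boldsymbol{\mu}$ and $\mathbf b:=\boldsymbol{\Omega}^{-1}\boldsymbol{\nu}$. Since $\boldsymbol{\Sigma}\succ0$, the form $\langle \mathbf x,\mathbf y\rangle_{\boldsymbol{\Sigma}}:=\mathbf x^\top\boldsymbol{\Sigma}\mathbf y$ is an inner product on $\mathbb R^n$. The three quadratic expressions in the statement then become
\[
\boldsymbol{\mu}^\top\boldsymbol{\Omega}^{-1}\boldsymbol{\nu}=\langle \mathbf a,\mathbf b\rangle_{\boldsymbol{\Sigma}},\qquad
\boldsymbol{\mu}^\top\boldsymbol{\Sigma}^{-1}\boldsymbol{\mu}=\|\mathbf a\|_{\boldsymbol{\Sigma}}^2,\qquad
\boldsymbol{\nu}^\top\boldsymbol{\Omega}^{-1}\boldsymbol{\Sigma}\boldsymbol{\Omega}^{-1}\boldsymbol{\nu}=\|\mathbf b\|_{\boldsymbol{\Sigma}}^2 .
\]
The first identity uses the symmetry of $\boldsymbol{\Sigma}$, since $\mathbf a^\top\boldsymbol{\Sigma}\mathbf b=\boldsymbol{\mu}^\top\boldsymbol{\Sigma}^{-1}\boldsymbol{\Sigma}\boldsymbol{\Omega}^{-1}\boldsymbol{\nu}$.

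Because $\boldsymbol{\mu},\boldsymbol{\nu}\neq0$ and both $\boldsymbol{\Sigma}^{-1}$ and $\boldsymbol{\Omega}^{-1}$ are invertible, we have $\mathbf a,\mathbf b\neq0$. Hence both norms are strictly positive and the ratio in the statement is well defined. By Cauchy--Schwarz in $\langle\cdot,\cdot\rangle_{\boldsymbol{\Sigma}}$, the ratio $\langle\mathbf a,\mathbf b\rangle_{\boldsymbol{\Sigma}}/(\|\mathbf a\|_{\boldsymbol{\Sigma}}\|\mathbf b\|_{\boldsymbol{\Sigma}})$ lies in $[-1,1]$. Since $\cos$ is a bijection from $[0,\pi]$ onto $[-1,1]$, there is a unique $\theta\in[0,\pi]$ with the stated cosine, namely $\theta=\arccos(\cdot)$.

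For the product formula, I multiply the cosine by $S^\star_{\boldsymbol{\mu},\boldsymbol{\Sigma}}=\|\mathbf a\|_{\boldsymbol{\Sigma}}$. The factor $\sqrt{\boldsymbol{\mu}^\top\boldsymbol{\Sigma}^{-1}\boldsymbol{\mu}}$ cancels, which leaves exactly the expression \eqref{eq:misspecified-sharpe} for $S_{\boldsymbol{\mu},\boldsymbol{\Sigma}}(\boldsymbol{\nu},\boldsymbol{\Omega})$.

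For maximality, note that $S^\star_{\boldsymbol{\mu},\boldsymbol{\Sigma}}>0$ is fixed. So $S_{\boldsymbol{\mu},\boldsymbol{\Sigma}}$ is maximized exactly when $\cos\theta=1$, that is, when $\theta=0$. The equality case of Cauchy--Schwarz with a positive inner product says $\cos\theta=1$ iff $\mathbf b=c\,\mathbf a$ for some $c>0$. This is precisely $\boldsymbol{\Omega}^{-1}\boldsymbol{\nu}=c\,\boldsymbol{\Sigma}^{-1}\boldsymbol{\mu}$, and in that case the Sharpe ratio attains $S^\star_{\boldsymbol{\mu},\boldsymbol{\Sigma}}$.

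The only step needing care is the equality characterization. Equality $|\langle\mathbf a,\mathbf b\rangle|=\|\mathbf a\|\|\mathbf b\|$ gives only $\mathbf b=c\,\mathbf a$ with $c\neq0$. The sign condition $c>0$ then comes from requiring $\langle\mathbf a,\mathbf b\rangle_{\boldsymbol{\Sigma}}>0$, since $c<0$ corresponds to $\theta=\pi$, which is the minimum rather than the maximum. Everything else is direct algebra.
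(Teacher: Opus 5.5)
Your proposal is correct and is essentially the paper's proof: Cauchy--Schwarz in $\langle\cdot,\cdot\rangle_{\boldsymbol{\Sigma}}$ applied to $\boldsymbol{\Sigma}^{-1}\boldsymbol{\mu}$ and $\boldsymbol{\Omega}^{-1}\boldsymbol{\nu}$, uniqueness of $\theta$ via $\arccos$, and the equality case (with the sign forcing $c>0$) for maximality. If anything, you spell out the ``minor calculations'' behind $S_{\boldsymbol{\mu},\boldsymbol{\Sigma}}(\boldsymbol{\nu},\boldsymbol{\Omega})=S^\star_{\boldsymbol{\mu},\boldsymbol{\Sigma}}\cos\theta_{\boldsymbol{\mu},\boldsymbol{\Sigma}}(\boldsymbol{\nu},\boldsymbol{\Omega})$, which the paper leaves to the reader.
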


We say that an expert view is \emph{directionally correct} if, when $\boldsymbol{\mu}$ is the true return mean and $\boldsymbol{\nu}$ is our present forecast, the expert's prediction is of the form $\boldsymbol{\nu}_\lambda
:=
(1-\lambda)\boldsymbol{\nu}
+\lambda\boldsymbol{\mu}$,
for some $\lambda \in [0,1]$. For the next result, we assume that our original prediction, the expert view, as well as the real covariance matrix are the same (namely $\boldsymbol{\Sigma}$). We define $A:=\langle\boldsymbol{\nu},\boldsymbol{\nu}\rangle_{\boldsymbol{\Sigma}^{-1}},$ and $B:=\langle\boldsymbol{\nu},\boldsymbol{\mu}-\boldsymbol{\nu}\rangle_{\boldsymbol{\Sigma}^{-1}}$, where $\langle\boldsymbol{x},\boldsymbol{y}\rangle_{\boldsymbol{\Sigma}^{-1}}
:=\boldsymbol{x}^{\top}\boldsymbol{\Sigma}^{-1}\boldsymbol{y}$.

\begin{theorem}[Monotonicity and concavity in $\lambda$]
\label{thm:mean-improvement}
Let $S(\lambda)$ denote the Sharpe ratio of the portfolio constructed
by the mean estimate $\boldsymbol{\nu}_\lambda$, evaluated under the
true return distribution. That is, $S(\lambda):=S_{\boldsymbol{\mu},\boldsymbol{\Sigma}}
(\boldsymbol{\nu}_\lambda,\boldsymbol{\Sigma})$.
Assume that $\boldsymbol{\nu}_\lambda\neq\boldsymbol{0}$ for all
$\lambda\in[0,1]$. Then $S(\lambda)$ is nondecreasing on $[0,1]$. If $\boldsymbol{\mu}$ and $\boldsymbol{\nu}$ are non-collinear, then
$S(\lambda)$ is strictly increasing on $[0,1]$, with $S'(1)=0$.
$S(\lambda)$ is concave around 1. Moreover, $S(\lambda)$ is concave on $[0,1]$
if and only if $A+3B\ge0$.
\end{theorem}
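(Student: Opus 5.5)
The plan is to work entirely in the $\boldsymbol{\Sigma}^{-1}$ inner product and reduce $S(\lambda)$ to an explicit ratio of a linear and a quadratic polynomial in $\lambda$. Since $\boldsymbol{\Omega}=\boldsymbol{\Sigma}$, the definition \eqref{eq:misspecified-sharpe} gives $S(\lambda)=\langle\boldsymbol{\mu},\boldsymbol{\nu}_\lambda\rangle_{\boldsymbol{\Sigma}^{-1}}/\sqrt{\langle\boldsymbol{\nu}_\lambda,\boldsymbol{\nu}_\lambda\rangle_{\boldsymbol{\Sigma}^{-1}}}$.

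Set $\mathbf d:=\boldsymbol{\mu}-\boldsymbol{\nu}$ and $C:=\langle\mathbf d,\mathbf d\rangle_{\boldsymbol{\Sigma}^{-1}}$, so that $\boldsymbol{\nu}_\lambda=\boldsymbol{\nu}+\lambda\mathbf d$. Then
\[
N(\lambda):=\langle\boldsymbol{\mu},\boldsymbol{\nu}_\lambda\rangle_{\boldsymbol{\Sigma}^{-1}}=A+B+\lambda(B+C),\qquad
Q(\lambda):=\langle\boldsymbol{\nu}_\lambda,\boldsymbol{\nu}_\lambda\rangle_{\boldsymbol{\Sigma}^{-1}}=A+2\lambda B+\lambda^2C .
\]
The hypothesis $\boldsymbol{\nu}_\lambda\neq\mathbf 0$ ensures $Q>0$ on $[0,1]$, so $S=NQ^{-1/2}$ is smooth there.

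\textbf{First derivative.} Differentiating gives $S'=(N'Q-\tfrac12NQ')Q^{-3/2}$. Expanding $(B+C)Q-N(B+\lambda C)$, the $\lambda^2$ terms cancel, and I expect
\[
S'(\lambda)=\frac{(1-\lambda)(AC-B^2)}{Q(\lambda)^{3/2}}.
\]
By Cauchy--Schwarz in the $\boldsymbol{\Sigma}^{-1}$ inner product, $AC-B^2\ge0$, with strict inequality exactly when $\boldsymbol{\nu}$ and $\mathbf d$ are non-collinear. This is equivalent to $\boldsymbol{\mu}$ and $\boldsymbol{\nu}$ being non-collinear. Hence $S$ is nondecreasing, strictly increasing in the non-collinear case, and $S'(1)=0$.

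\textbf{Second derivative.} Differentiating again and factoring out $-(AC-B^2)Q^{-5/2}$ should give
\[
S''(\lambda)=-(AC-B^2)\,Q^{-5/2}\,g(\lambda),\qquad g(\lambda):=Q(\lambda)+3(1-\lambda)(B+\lambda C)=A+3B+\lambda(3C-B)-2\lambda^2C .
\]
In the non-collinear case, concavity at $\lambda$ is therefore equivalent to $g(\lambda)\ge0$. Two observations finish the argument.
\begin{itemize}
\item We have $g(1)=Q(1)=\langle\boldsymbol{\mu},\boldsymbol{\mu}\rangle_{\boldsymbol{\Sigma}^{-1}}>0$. By continuity $g>0$ near $1$, which gives concavity around $1$.
\item Since $C>0$, $g$ is a concave quadratic, so its minimum on $[0,1]$ is attained at an endpoint. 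Because $g(1)>0$, this yields $g\ge0$ on $[0,1]$ if and only if $g(0)=A+3B\ge0$.
\end{itemize}

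\textbf{Main obstacle.} The substantive step is the bookkeeping in the two derivative computations, in particular checking the cancellation that produces the clean factor $(1-\lambda)$ and the form of $g$. A second point needs care: the collinear case in the ``if and only if''. There $S$ is constant, hence trivially concave, but $A+3B$ can be negative. For instance, if $\boldsymbol{\mu}=c\boldsymbol{\nu}$ with $c>0$, then $A+3B=(3c-2)A$. So the equivalence should be read under the non-collinearity assumption, and I would state that explicitly.
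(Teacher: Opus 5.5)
Your proposal is correct and follows the paper's proof essentially step for step: the same reduction to $S(\lambda)=\frac{A+B+\lambda(B+C)}{(A+2\lambda B+\lambda^2C)^{1/2}}$, the same factorizations of $S'$ and $S''$ (your $g$ is the paper's $H$), and the same endpoint argument using the concavity of $g$ and $g(1)=\langle\boldsymbol\mu,\boldsymbol\mu\rangle_{\boldsymbol\Sigma^{-1}}>0$. Your caveat that the ``if and only if'' holds only under non-collinearity is a valid refinement that the paper does not address: in the collinear case $S$ is constant while $A+3B=(3c-2)A$ can be negative. Likewise, your continuity argument for concavity near $1$ is more careful than the paper's closing claim that $S''<0$ on all of $[0,1]$.
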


By Theorem~\ref{thm:mean-improvement}, $S(\lambda)$ is concave on all of $[0,1]$
iff $A+3B\ge0$; since $A\ge0$, this holds whenever $B=\langle\boldsymbol\nu,
\boldsymbol\mu-\boldsymbol\nu\rangle\ge0$, i.e., $\cos\theta'>0$, meaning the
future view stays reasonably close to the historical mean. Figure~\ref{fig:sharpe-bad}
confirms this on Dataset~1 (\S~\ref{appendix:data}), where the condition holds and
the realized Sharpe ratio is concave in $[0,1]$. When $A+3B<0$, concavity
is guaranteed only near $\lambda=1$; Figure~\ref{fig:sharpe-good} shows the resulting
inflection point where the curve shifts from convex to concave.

\begin{figure}[h]
    \centering
    \begin{subfigure}[t]{0.49\textwidth}
        \centering
        \includegraphics[width=\linewidth]{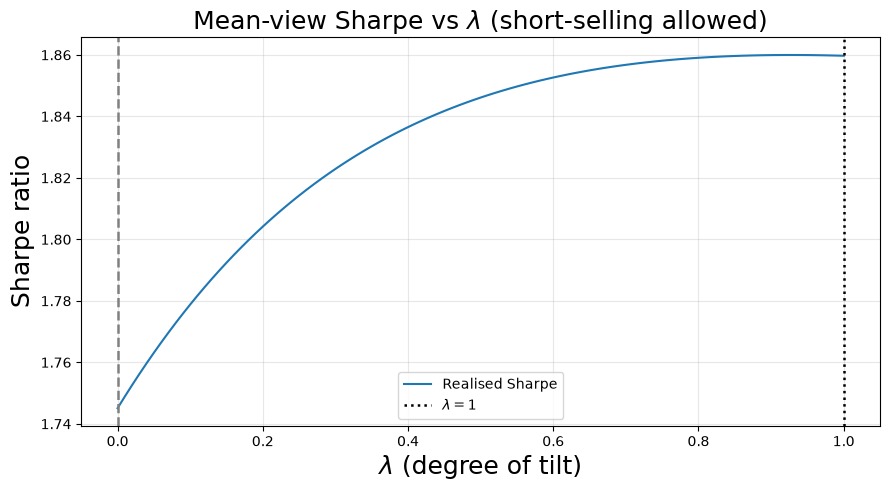}
        \caption{Sharpe vs $\lambda$ on Dataset 1}
        \label{fig:sharpe-bad}
    \end{subfigure}
    \hfill
    \begin{subfigure}[t]{0.49\textwidth}
        \centering
        \includegraphics[width=\linewidth]{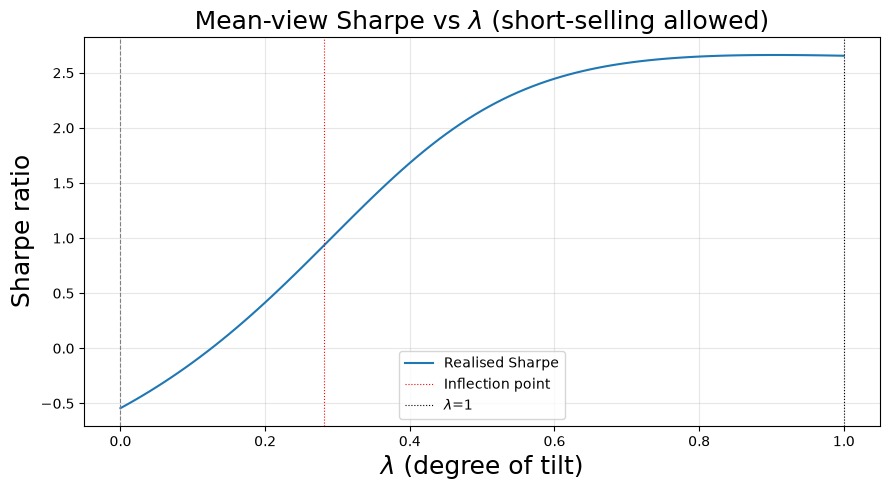}
        \caption{Sharpe vs $\lambda$ on Dataset 2}
        \label{fig:sharpe-good}
    \end{subfigure}
    \caption{Concavity of Sharpe as a function of $\lambda$}
\end{figure}

We prove Proposition ~\ref{prop:sharpe-angle} and an analogous statement for covariance views as well as Theorem~\ref{thm:mean-improvement} in \S~\ref{sec:thrandp}.

\section{Sensitivity of View Incorporation in the Markowitz Framework}\label{expmare}

In this section, we explain how expert views on the mean of returns can be theoretically incorporated in the Markowitz framework. We also state the perturbation bound Theorem \ref{prop:srd}, subsequently proved in \S~\ref{appendix:perturbation}. Throughout this section, we assume the existence of a historical measure $P$ on returnss (see \S~\ref{appendix:empiricalest}).

In the mean-view setup, an expert specifies the mean $\boldsymbol{\mu}^{\text{view}}$ of the future distribution. We incorporate the expert view as in \cite{black1992global}. For two probability distributions $Q$ and $P$ with densities $q$ and $p$ on the same support, the Kullback-Leibler (KL) divergence of $Q$ from $P$ is defined as
\[
D_{KL}(Q \,\|\, P) \;=\; \mathbb{E}_Q\!\left[\log \frac{q(\mathbf{r})}{p(\mathbf{r})}\right] \;=\; \int q(\mathbf{r}) \log \frac{q(\mathbf{r})}{p(\mathbf{r})}\, d\mathbf{r}.
\]
Incorporation is performed by changing the historical measure $P$ to $Q$, where $Q$ is the solution to the constrained problem $Q^\star \;=\; \mbox{argmin}_{Q \,:\, \mathbb{E}_Q[\mathbf{r}]=\boldsymbol{\mu}_\lambda^{\text{view}}} D_{KL}(Q \,\|\, P)$. If we assume that $\mathbf{r}_t \sim \mathcal{N}(\boldsymbol{\mu}, \boldsymbol{\Sigma})$ under $P$, then it is well-known that $\mathbf{r}_t \sim \mathcal{N}(\boldsymbol{\mu}^{\text{view}}, \boldsymbol{\Sigma})$ under $Q$. Theorem~\ref{thm:mean-improvement} applies to this particular scenario. However, under non-Gaussian returns, a mean view may also perturb the covariance matrix. We show that this perturbation costs only a \emph{second-order} loss in Sharpe ratio, assuming that the perturbation is small enough relative to the sizes of the true mean and variance. In this sense, our earlier results on directionally correct mean views are robust.

Recall $\operatorname{S}_{\boldsymbol{\mu},\boldsymbol{\Sigma}}$ and $\operatorname{S}^{\star}_{\boldsymbol{\mu},\boldsymbol{\Sigma}}$ from \eqref{eq:misspecified-sharpe}. For a symmetric matrix $A$, let $\|A\|_{op} = \sup_{\mathbf{x} \in \mathbb R^n \setminus \{0\}} \frac{\|A\mathbf{x}\|}{\|\mathbf{x}\|}.$
\begin{theorem}\label{prop:srd}
    Let $\boldsymbol{\Sigma}, \boldsymbol{\Omega} \succ 0$, and $\boldsymbol\mu, \boldsymbol\nu$ be fixed. For some $m>0$, suppose that $\|\boldsymbol{\mu}\|\geq m, \boldsymbol{\Sigma} - mI \succ 0$ hold. Furthermore, assume that $\|\boldsymbol{\mu}-\boldsymbol{\nu}\|< \frac m2, \|\boldsymbol{\Omega} - \boldsymbol{\Sigma}\|_{op} < \frac m2$. Then,  $$  0 \leq \frac{\operatorname{S}^{\star}_{\boldsymbol{\mu},\boldsymbol{\Sigma}} - \operatorname{S}_{\boldsymbol{\mu},\boldsymbol{\Sigma}} (\boldsymbol{\nu}, \boldsymbol{\Omega})}{\operatorname{S}^{\star}_{\boldsymbol{\mu},\boldsymbol{\Sigma}}} <  \frac{2\|\boldsymbol{\Sigma}^{\frac 12}\|_{op}^4}{m^2} \left(\|\boldsymbol\mu-\boldsymbol\nu\|+ \|\boldsymbol \Omega - \boldsymbol\Sigma\|_{op}\right)^2.$$
\end{theorem}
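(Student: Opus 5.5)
By Proposition~\ref{prop:sharpe-angle}, the quantity to bound is exactly $1-\cos\theta_{\boldsymbol{\mu},\boldsymbol{\Sigma}}(\boldsymbol{\nu},\boldsymbol{\Omega})$. The lower bound $0$ is then immediate from $\cos\theta\le 1$, so all the work is in the upper bound. I would rewrite the cosine as an ordinary Euclidean cosine. Set
\[
\mathbf u:=\boldsymbol{\Sigma}^{-1/2}\boldsymbol{\mu},\qquad \mathbf v:=\boldsymbol{\Sigma}^{1/2}\boldsymbol{\Omega}^{-1}\boldsymbol{\nu}.
\]
Then $\mathbf u^\top\mathbf v=\boldsymbol{\mu}^\top\boldsymbol{\Omega}^{-1}\boldsymbol{\nu}$, $\|\mathbf u\|^2=\boldsymbol{\mu}^\top\boldsymbol{\Sigma}^{-1}\boldsymbol{\mu}$ and $\|\mathbf v\|^2=\boldsymbol{\nu}^\top\boldsymbol{\Omega}^{-1}\boldsymbol{\Sigma}\boldsymbol{\Omega}^{-1}\boldsymbol{\nu}$, so $\cos\theta=\langle \mathbf u/\|\mathbf u\|,\mathbf v/\|\mathbf v\|\rangle$. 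Since both vectors are unit vectors,
\[
1-\cos\theta=\tfrac12\Bigl\|\tfrac{\mathbf u}{\|\mathbf u\|}-\tfrac{\mathbf v}{\|\mathbf v\|}\Bigr\|^2 .
\]
The elementary inequality $\bigl\|\mathbf u/\|\mathbf u\|-\mathbf v/\|\mathbf v\|\bigr\|\le 2\|\mathbf u-\mathbf v\|/\|\mathbf u\|$ then gives
\[
1-\cos\theta\le 2\,\|\mathbf u-\mathbf v\|^2/\|\mathbf u\|^2 .
\]
This already has the second-order shape of the claimed bound. It remains to bound $\|\mathbf u-\mathbf v\|$ from above and $\|\mathbf u\|$ from below.

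For the difference, write $\mathbf u-\mathbf v=\boldsymbol{\Sigma}^{1/2}(\boldsymbol{\Sigma}^{-1}\boldsymbol{\mu}-\boldsymbol{\Omega}^{-1}\boldsymbol{\nu})$ and use the resolvent identity $\boldsymbol{\Sigma}^{-1}-\boldsymbol{\Omega}^{-1}=\boldsymbol{\Sigma}^{-1}(\boldsymbol{\Omega}-\boldsymbol{\Sigma})\boldsymbol{\Omega}^{-1}$:
\[
\boldsymbol{\Sigma}^{-1}\boldsymbol{\mu}-\boldsymbol{\Omega}^{-1}\boldsymbol{\nu}=\boldsymbol{\Sigma}^{-1}(\boldsymbol{\mu}-\boldsymbol{\nu})+\boldsymbol{\Sigma}^{-1}(\boldsymbol{\Omega}-\boldsymbol{\Sigma})\boldsymbol{\Omega}^{-1}\boldsymbol{\nu}.
\]
The hypotheses then supply the needed norm bounds:
\begin{itemize}
\item $\boldsymbol{\Sigma}\succ mI$ gives $\|\boldsymbol{\Sigma}^{-1}\|_{op}<1/m$.
\item $\|\boldsymbol{\Omega}-\boldsymbol{\Sigma}\|_{op}<m/2$ gives $\boldsymbol{\Omega}\succ \tfrac m2 I$, hence $\|\boldsymbol{\Omega}^{-1}\|_{op}<2/m$.
\item $\|\boldsymbol{\mu}-\boldsymbol{\nu}\|<m/2\le\|\boldsymbol{\mu}\|/2$ gives $\|\boldsymbol{\nu}\|<\tfrac32\|\boldsymbol{\mu}\|$.
\end{itemize}
For the denominator, $\|\mathbf u\|^2=\boldsymbol{\mu}^\top\boldsymbol{\Sigma}^{-1}\boldsymbol{\mu}\ge \|\boldsymbol{\mu}\|^2/\|\boldsymbol{\Sigma}\|_{op}=\|\boldsymbol{\mu}\|^2/\|\boldsymbol{\Sigma}^{1/2}\|_{op}^2$. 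Combining these,
\[
\frac{\|\mathbf u-\mathbf v\|}{\|\mathbf u\|}\le\frac{\|\boldsymbol{\Sigma}^{1/2}\|_{op}^2}{m\,\|\boldsymbol{\mu}\|}\Bigl(\|\boldsymbol{\mu}-\boldsymbol{\nu}\|+\tfrac{2\|\boldsymbol{\nu}\|}{m}\|\boldsymbol{\Omega}-\boldsymbol{\Sigma}\|_{op}\Bigr).
\]
I would then use $\|\boldsymbol{\mu}\|\ge m$ and $\|\boldsymbol{\nu}\|/\|\boldsymbol{\mu}\|<3/2$ to simplify the right-hand side. Squaring and inserting into $1-\cos\theta\le 2\|\mathbf u-\mathbf v\|^2/\|\mathbf u\|^2$ gives a bound of the form $\frac{2\|\boldsymbol{\Sigma}^{1/2}\|_{op}^4}{m^2}\,C\,(\|\boldsymbol{\mu}-\boldsymbol{\nu}\|+\|\boldsymbol{\Omega}-\boldsymbol{\Sigma}\|_{op})^2$.

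The main obstacle is the bookkeeping of constants and powers of $m$. A naive chain of inequalities gives an extra factor $1/m^2$ (from $\|\boldsymbol{\mu}\|\ge m$) and a constant $3$ in front of the $\boldsymbol{\Omega}$ term, rather than exactly the stated expression. To recover the stated form I would first try to avoid dividing by $\|\boldsymbol{\mu}\|$ twice. One option is to bound $\|\mathbf u\|$ from below via $\|\boldsymbol{\Sigma}^{-1/2}\|$ more cleverly. Another is to normalize $\mathbf v$ against $\|\mathbf v\|$ instead of $\|\mathbf u\|$, using whichever of $\|\mathbf u\|,\|\mathbf v\|$ is larger in the elementary inequality. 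I would also check the case $\|\boldsymbol{\mu}\|\ge m$ together with the strictness of the hypotheses, which give the strict inequality. If the exact constant cannot be matched, the argument still yields the qualitative second-order bound with an explicit constant.
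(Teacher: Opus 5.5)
Your argument is the paper's own up to the last step. Your $\mathbf u,\mathbf v$ are the paper's $\mathbf x$ and $A^{-1}\mathbf y$. The half-chord identity, the inequality $\bigl\|\mathbf u/\|\mathbf u\|-\mathbf v/\|\mathbf v\|\bigr\|\le 2\|\mathbf u-\mathbf v\|/\|\mathbf u\|$, the resolvent identity and the operator-norm bounds are also the same. The only difference is how $\mathbf u-\mathbf v$ is split. The paper writes it as $\boldsymbol{\Sigma}^{1/2}\boldsymbol{\Omega}^{-1}(\boldsymbol{\nu}-\boldsymbol{\mu})+\boldsymbol{\Sigma}^{1/2}(\boldsymbol{\Omega}^{-1}-\boldsymbol{\Sigma}^{-1})\boldsymbol{\mu}$, keeping $\boldsymbol\mu$ rather than $\boldsymbol\nu$ next to the resolvent. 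It therefore needs no bound on $\|\boldsymbol\nu\|$, and gets $2\|\boldsymbol\mu-\boldsymbol\nu\|+2\|\boldsymbol\Omega-\boldsymbol\Sigma\|_{op}$ where you get $\|\boldsymbol\mu-\boldsymbol\nu\|+3\|\boldsymbol\Omega-\boldsymbol\Sigma\|_{op}$. The gap is exactly the one you flagged, and none of your proposed repairs can close it, because the inequality as stated is false. What your chain correctly proves is $1-\cos\theta\le \frac{2\|\boldsymbol{\Sigma}^{1/2}\|_{op}^4}{m^4}\bigl(\|\boldsymbol\mu-\boldsymbol\nu\|+3\|\boldsymbol\Omega-\boldsymbol\Sigma\|_{op}\bigr)^2$, with $m^{-4}$ rather than $m^{-2}$. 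Your fallback of stating this bound is the right call.

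For a counterexample, take $n=2$, $\boldsymbol\Sigma=\boldsymbol\Omega=2mI$, $\boldsymbol\mu=m\mathbf e_1$ and $\boldsymbol\nu=m(\mathbf e_1+0.4\,\mathbf e_2)$.
\begin{itemize}
\item All hypotheses hold: $\|\boldsymbol\mu\|=m$, $\boldsymbol\Sigma-mI=mI$, $\|\boldsymbol\mu-\boldsymbol\nu\|=0.4m$ and $\boldsymbol\Omega-\boldsymbol\Sigma=0$.
\item Since $\boldsymbol\Omega=\boldsymbol\Sigma$ is a multiple of $I$, the relative loss is $1-\cos\angle(\boldsymbol\mu,\boldsymbol\nu)=1-1/\sqrt{1.16}\approx 0.0715$ for every $m$.
\item The right side is $\frac{2\cdot 4m^2}{m^2}(0.4m)^2=1.28\,m^2$, which equals $0.0128$ at $m=0.1$.
\end{itemize}
The left side is constant along this family while the right side tends to $0$. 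So no sharper lower bound on $\|\mathbf u\|$ and no change of normalization can produce the $m^{-2}$ form.

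The paper reaches its constant only through two slips.
\begin{itemize}
\item The factor $2$ disappears in the equality right after the elementary inequality is applied.
\item In the final display, the paper's own bounds give $\frac{4\|\boldsymbol\Sigma^{1/2}\|_{op}^4}{m^2}\bigl(\frac{\|\boldsymbol\mu-\boldsymbol\nu\|}{\|\boldsymbol\mu\|}+\frac{\|\boldsymbol\Omega-\boldsymbol\Sigma\|_{op}}{m}\bigr)^2$. Using $\|\boldsymbol\mu\|\ge m$ turns this into a $\frac{4}{m^4}$ coefficient, or $\frac{8}{m^4}$ once the lost $2$ is restored, not $\frac{2}{m^2}$.
\end{itemize}
The strict inequality also fails at $\boldsymbol\nu=\boldsymbol\mu$, $\boldsymbol\Omega=\boldsymbol\Sigma$, where both sides are $0$, so strictness of the hypotheses alone does not give it.

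State the $m^{-4}$ bound as the result, with strict inequality unless $\boldsymbol\nu=\boldsymbol\mu$ and $\boldsymbol\Omega=\boldsymbol\Sigma$. It still gives the second-order robustness the section is after. It implies the printed form only under an added hypothesis such as $m\ge 2$, using the paper's split.
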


\section{Expert Views in the End-to-End Framework}\label{sec:end-to-end}

The end-to-end approach to portfolio optimization infers asset allocations directly from
historical data, bypassing the intermediate step of estimating the return distribution
\citep{zhang2020deep, zhang2021universal}. We incorporate mean views into this framework
through an exponential change of measure. A neural network $f_\phi$ maps past market
information to portfolio weights: the weight selected on day $t$ is a function of the
$L$-day lookback window preceding the decision date.

We work with $n$ assets and $T+L$ days of return data $x_t = (x_{t,1},\dots,x_{t,n})^\top$,
$t = -L+1,\dots,T$. For each day $i=0,\dots,T-1$, the network outputs the next day's
portfolio weight $w_{i+1} = f_\phi(x_{i-L+1},\dots,x_i)$, yielding return
$R_i(\phi) = x_{i+1}^\top w_{i+1}$. Following \cite{zhang2020deep}, $\phi$ is trained by
gradient ascent directly on the Sharpe ratio of these portfolio returns, rather than an intermediate return-prediction loss. The only difference being that we modify the Sharpe ratio via likelihood ratios to incorporate directionally correct views.

Let
$\boldsymbol{\mu}^{\text{hist}} = \mathbb{E}_P[\mathbf r]$ be the mean under the empirical
distribution $P$ (see \S~\ref{appendix:empiricalest} for its construction) and $\boldsymbol{\mu}^{\text{view}}$ the
mean implied by the expert view; for $\lambda\in[0,1]$, set
$\boldsymbol{\mu}^{\text{view}}_\lambda = (1-\lambda)\boldsymbol{\mu}^{\text{hist}} +
\lambda\boldsymbol{\mu}^{\text{view}}$. As in \S~\ref{expmare}, we reweight via the
likelihood ratio
\[
\ell_\lambda(\mathbf r) = \frac{\exp(\boldsymbol{\eta}_\lambda^\top \mathbf r)}
{\mathbb{E}_P[\exp(\boldsymbol{\eta}_\lambda^\top \mathbf r)]}, \qquad
\mathbb{E}_P[\ell_\lambda(\mathbf r)\mathbf r] = \boldsymbol{\mu}^{\text{view}}_\lambda.
\]
Since $P$ is uniform on $\mathbf r_1,\dots,\mathbf r_T$, this gives the explicit weights
$q_{t,\lambda} = \exp(\boldsymbol{\eta}_\lambda^\top \mathbf r_t) /
\sum_{s=1}^T \exp(\boldsymbol{\eta}_\lambda^\top \mathbf r_s)$, with
$\sum_{t=1}^T q_{t,\lambda}\mathbf r_t = \boldsymbol{\mu}^{\text{view}}_\lambda$.

With $m_\lambda(\phi) = \mathbb{E}_P[\ell_\lambda(\mathbf r_t) R_t(\phi)]$ and
$v_\lambda(\phi) = \mathbb{E}_P[\ell_\lambda(\mathbf r_t)(R_t(\phi)-m_\lambda(\phi))^2]$, we
update $\phi$ by gradient ascent on the view-adjusted Sharpe ratio
$S_{\phi,\lambda} = m_\lambda(\phi)/\sqrt{v_\lambda(\phi)}$.

\textbf{Experiments.} We test this framework against the Markowitz model of
\S~\ref{expmare} on daily NSE India data for the 20 largest firms by market
capitalization (2006--2025; see \S~\ref{appendix:data}). Both models are given a view
$\boldsymbol{\mu}_\lambda$, $\lambda \in \{0, 0.1, \dots, 1\}$, interpolating between the
historical mean (2006--2023) and the realized mean over an evaluation period
(2024--2025), and are evaluated on that period; $\lambda=1$ represents a perfectly
accurate view unavailable at construction time. Given constraints on short-selling in
Indian markets, we restrict both experiments to long-only portfolios, giving a common
constraint under which to compare the two approaches.

\textbf{Models.} For Markowitz, the tilted probabilities $q_t(\lambda)$ that impose
$\boldsymbol{\mu}_\lambda$ also determine a covariance estimate\footnote{We estimate $\boldsymbol{\Sigma}_\lambda$ via the tilted probabilities
$q_t(\lambda)$ rather than by bootstrapping. We found this to be less noisy in practice.}
$\boldsymbol{\Sigma}_\lambda = \sum_{t=1}^T q_t(\lambda)(\mathbf r_t -
\boldsymbol{\mu}_\lambda)(\mathbf r_t - \boldsymbol{\mu}_\lambda)^\top$; using
$(\boldsymbol{\mu}_\lambda, \boldsymbol{\Sigma}_\lambda)$, we solve for the long-only
maximum-Sharpe weights $\mathbf w_\lambda \in \arg\max_{\mathbf w \geq 0,\, \mathbf
1^\top \mathbf w = 1} \mathbf w^\top \boldsymbol{\mu}_\lambda / \sqrt{\mathbf w^\top
\boldsymbol{\Sigma}_\lambda \mathbf w}$. For the end-to-end model, the network takes a
lookback window of past returns and outputs a long-only, fully invested portfolio
through a final softmax layer; the reported architecture is a single-layer LSTM with 16
hidden units, chosen after exploratory experiments with fully connected networks, CNNs,
and larger or deeper LSTMs, none of which improved on it, consistent with the findings
of \cite{zhang2020deep}. For each $\lambda$, we train 100 random initializations under
an identical protocol: 300 full-batch updates, with a 50-update warm-up followed by
cosine annealing over the remaining 250.

\textbf{Results.} Figure~\ref{fig:performance-comparison} compares the Markowitz
portfolio with the mean performance of the 100 LSTM runs, reported after a one-way
transaction cost of 20 bps on the value traded.\footnote{This is a conservative
estimate: delivery-based equity transactions in India carry a securities transaction tax
of 10 bps on both purchases and sales, before brokerage and other costs.}\footnote{If
$\mathbf w_t$ is the target allocation and $\mathbf w_{t-1}$ the allocation immediately
before rebalancing, the cost deducted from the daily return is $c\sum_{i=1}^n
\left|w_{t,i}-w_{t-1,i}\right|.$} Both improve sharply as $\lambda$ increases, with much
of the gain arriving already at small $\lambda$ and tapering off as the view approaches
the future mean; the two Sharpe curves are strikingly similar throughout, and dispersion
across LSTM seeds falls sharply for more informative views. After costs, the LSTM
performs slightly worse than Markowitz, particularly at intermediate $\lambda$,
consistent with its greater turnover. Exact Sharpe ratios and drawdowns for each
$\lambda$, along with portfolio allocations at $\lambda = 0, 0.5, 1$, are given in
\S~\ref{appendix:experimentresults}.

\begin{figure}[h]
    \centering
    \begin{subfigure}[t]{0.49\textwidth}
        \centering
        \includegraphics[width=\linewidth]{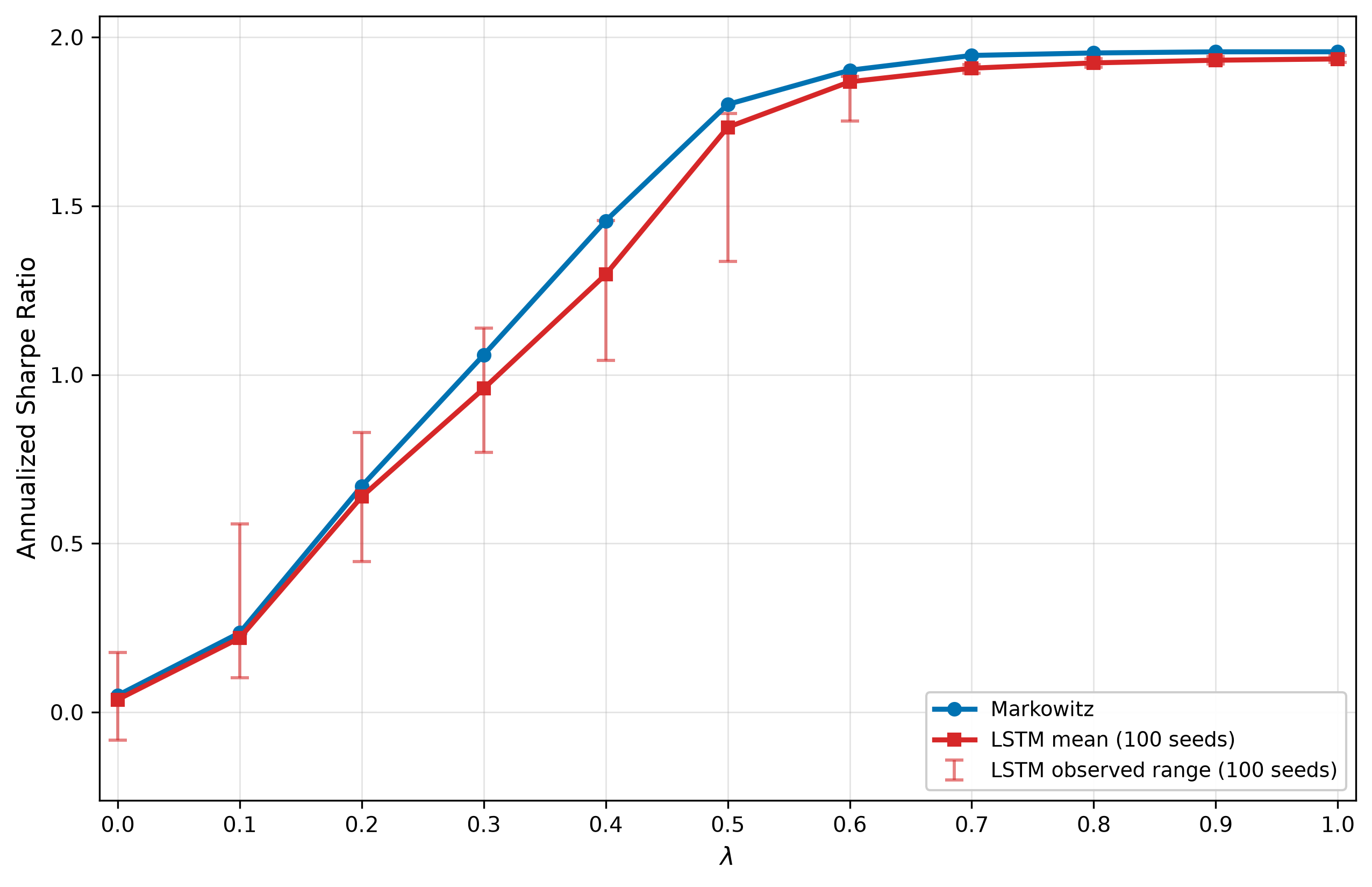}
        \caption{Annualized Sharpe ratio.}
        \label{fig:sharpe-comparison}
    \end{subfigure}
    \hfill
    \begin{subfigure}[t]{0.49\textwidth}
        \centering
        \includegraphics[width=\linewidth]{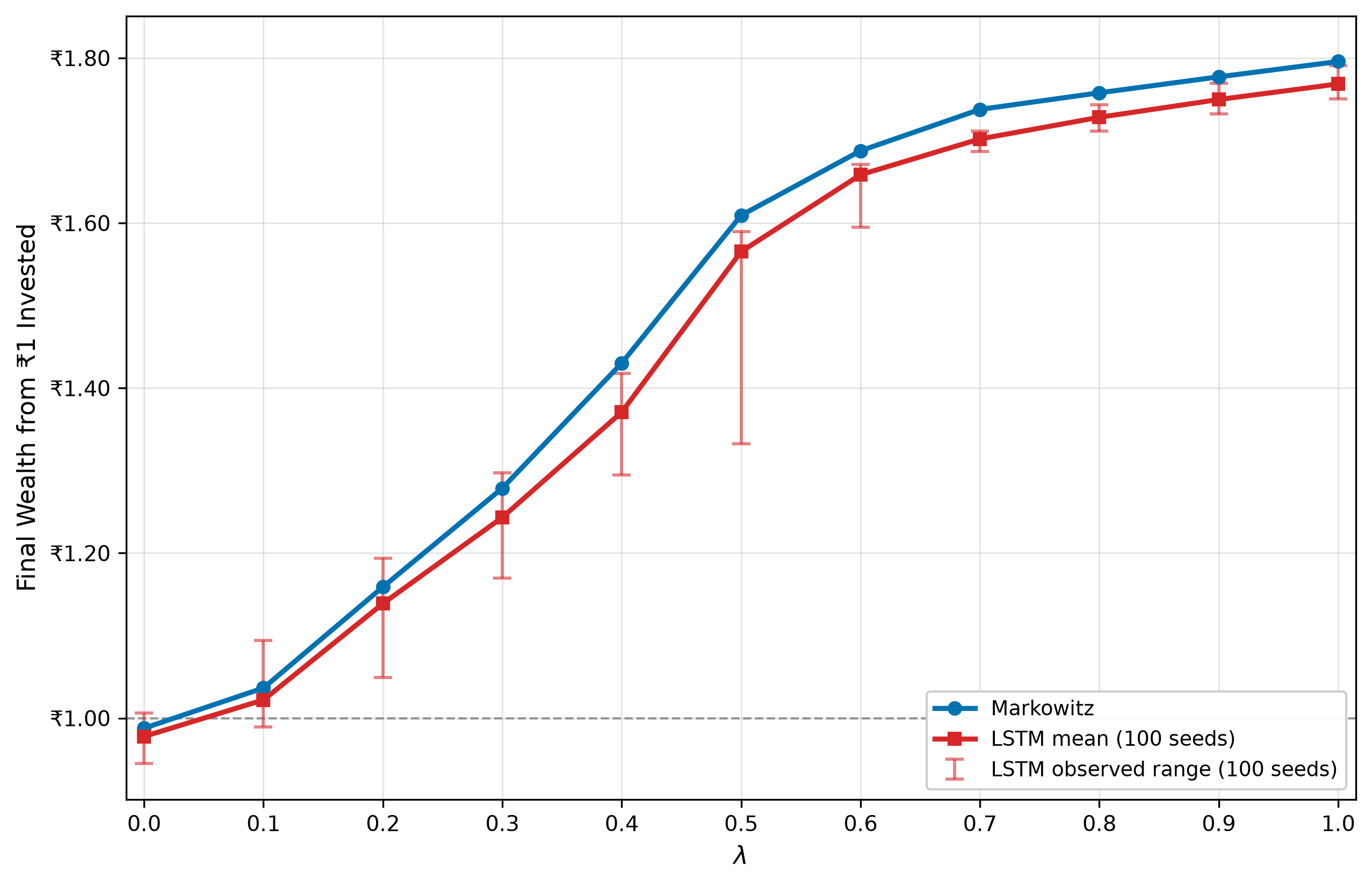}
        \caption{Final value of INR~1 investment.}
        \label{fig:wealth-comparison}
    \end{subfigure}
    \caption{Performance of Markowitz vs End-To-End across $\lambda$.}
    \label{fig:performance-comparison}
\end{figure}

\section{Conclusion}\label{conc}

In this article, we demonstrated that incorporating directionally correct expert views can improve portfolio design. We also showed that the Sharpe ratio is stable to minor errors in these views, heuristically demonstrating why complex ML models learn allocations similar to those predicted by Markowitz theory and, in turn, achieve comparable Sharpe ratios in our experiments.

Having established the value of directionally correct views, a natural next step is producing such views in practice, since our experiments constructed them using future information unavailable at decision time. We leave this to future work.


\bibliographystyle{unsrt}
\bibliography{reffornw}

\section{Theoretical results and proofs}\label{sec:thrandp}

In this section, we state and prove the theoretical results in this article, including any that were mentioned in the main body.

\subsection{Proofs in Section~\ref{sec:srunderev}}\label{sec:pfsinsec2}

In this section, we prove the results in Section~\ref{sec:srunderev}. Recall the definition of the Sharpe ratio \eqref{eq:SR}, and that $(\mu,\Sigma)$ are the true mean and variance of future returns, while $(\nu,\Omega)$ are the mean and variance estimated by any method.

\subsubsection{Proof of Proposition~\ref{prop:sharpe-angle} and Theorem~\ref{thm:mean-improvement}}

We begin with the proof of Proposition~\ref{prop:sharpe-angle}

\begin{proof}[Proof of Proposition~\ref{prop:sharpe-angle}]
Since $\boldsymbol{\Sigma}\succ0$, the bilinear form
 $    \langle \mathbf x,\mathbf y\rangle_{\boldsymbol{\Sigma}}
    :=
    \mathbf x^\top\boldsymbol{\Sigma}\mathbf y$ defines an inner product on $\mathbb R^n$. By the Cauchy-Schwarz inequality applied to this inner product, for any $\mathbf{x},\mathbf{y} \in \mathbb R^n$,
\[
    |\mathbf x^\top\mathbf y|
    \leq
    \sqrt{\mathbf x^\top\boldsymbol{\Sigma}^{-1}\mathbf x}\,
    \sqrt{\mathbf y^\top\boldsymbol{\Sigma}\mathbf y}.
\]
Taking $ \mkern7mu
    \mathbf x=\boldsymbol{\mu}
    $ and $
    \mathbf y=\boldsymbol{\Omega}^{-1}\boldsymbol{\nu} \mkern7mu$ we have
\[
\left|
\frac{
    \boldsymbol{\mu}^{\top}\boldsymbol{\Omega}^{-1}\boldsymbol{\nu}
}{
    \sqrt{
        \boldsymbol{\mu}^{\top}\boldsymbol{\Sigma}^{-1}\boldsymbol{\mu}
    }
    \sqrt{
        \boldsymbol{\nu}^{\top}
        \boldsymbol{\Omega}^{-1}\boldsymbol{\Sigma}
        \boldsymbol{\Omega}^{-1}\boldsymbol{\nu}
    }
}
\right|
\leq 1.
\]
Since $\boldsymbol{\mu},\boldsymbol{\nu}\neq0$ and
$\boldsymbol{\Sigma},\boldsymbol{\Omega}\succ0$, both terms in the
denominator are strictly positive. Thus there is a unique
$\theta_{\boldsymbol{\mu},\boldsymbol{\Sigma}}
(\boldsymbol{\nu},\boldsymbol{\Omega})\in[0,\pi]$
whose cosine is the quantity above. Equivalently, this is the angle
under $\langle\cdot,\cdot\rangle_{\boldsymbol{\Sigma}}$ between
$\boldsymbol{\Sigma}^{-1}\boldsymbol{\mu}$ and
$\boldsymbol{\Omega}^{-1}\boldsymbol{\nu}$.
\\
Now, through minor calculations, it is easily verified that $S_{\boldsymbol{\mu},\boldsymbol{\Sigma}}
(\boldsymbol{\nu},\boldsymbol{\Omega}) = S^\star_{\boldsymbol{\mu},\boldsymbol{\Sigma}}
\cos
\theta_{\boldsymbol{\mu},\boldsymbol{\Sigma}}
(\boldsymbol{\nu},\boldsymbol{\Omega})$.
Therefore $  S_{\boldsymbol{\mu},\boldsymbol{\Sigma}}
(\boldsymbol{\nu},\boldsymbol{\Omega})
\leq
S^\star_{\boldsymbol{\mu},\boldsymbol{\Sigma}}$,
with equality if and only if the angle is zero. By the equality
condition in Cauchy-Schwarz, this occurs  iff $\boldsymbol{\Omega}^{-1}\boldsymbol{\nu}
    =
    c\,\boldsymbol{\Sigma}^{-1}\boldsymbol{\mu}$
for some $c>0$.
\end{proof}

Next, we prove Theorem~\ref{thm:mean-improvement}.

\begin{proof}[Proof of Theorem~\ref{thm:mean-improvement}]
Using $\boldsymbol{\nu}_\lambda
=\boldsymbol{\nu}+\lambda(\boldsymbol{\mu}-\boldsymbol{\nu})$, we can write
\[
S(\lambda)
=
\frac{A+B+\lambda(B+C)}
{\bigl(A+2\lambda B+\lambda^2C\bigr)^{1/2}} \implies 
S'(\lambda)
=
\frac{(1-\lambda)(AC-B^2)}
{\bigl(A+2\lambda B+\lambda^2C\bigr)^{3/2}}.
\]
By the Cauchy-Schwarz inequality, $B^2\le AC$ with equality if and only if $\boldsymbol{\nu}$ and $\boldsymbol{\mu}-\boldsymbol{\nu}$ are collinear, which is equivalent to $\boldsymbol{\mu}$ and $\boldsymbol{\nu}$ being collinear. Hence $S'(\lambda)\ge0$ on $[0,1]$ and
$S'(1)=0$.

For concavity, differentiating once more yields
\begin{equation}\label{sprpr}
S''(\lambda)
=
-\,\frac{(AC-B^2)H(\lambda)}
{\bigl(A+2\lambda B+\lambda^2C\bigr)^{5/2}},
\end{equation}
where
$ 
H(\lambda)
=
A+3B+(3C-B)\lambda-2C\lambda^2.
$ 
Since $C\ge0$, $H$ is concave in $\lambda$. At the endpoints, $ H(0)=A+3B$, while $H(1)
=\langle\boldsymbol{\mu},\boldsymbol{\mu}\rangle_{\boldsymbol{\Sigma}^{-1}}>0$. Therefore, $H(\lambda)\ge0$ for every $\lambda\in[0,1]$ if and only if
$A+3B\ge0$. 

In the non-collinear case, $AC-B^2>0$, hence $S''(\lambda)<0$ throughout $[0,1]$ by \eqref{sprpr}. 
\end{proof}

\subsubsection{Covariance Views}\label{appendix:covarianceviews}

In this section, we state and prove an analogue of Theorem~\ref{thm:mean-improvement} for covariance views (Theorem~\ref{thm:cov-improvement}).

For a fixed symmetric matrix $A$ and $k \in \mathbb{R}$, consider a view of the form $\mbox{tr}(A \boldsymbol{\Omega}) =k$, where $\Omega$ is the covariance matrix of future returns. Interestingly, for Gaussian returns, incorporating this view corresponds to a linear shift of the inverse matrix.

\begin{proposition}
\label{prop:precision-shift}
Let $\boldsymbol{\Sigma}\succ0$ and consider
\[
\boldsymbol{\Omega}^\star = \arg\min_{\boldsymbol{\Omega}\succ0} \; D_{\mathrm{KL}}\big(\mathcal N(\boldsymbol{\mu},\boldsymbol{\Omega})\,\|\,\mathcal N(\boldsymbol{\mu},\boldsymbol{\Sigma})\big)
\quad \text{subject to} \quad \operatorname{tr}(A\boldsymbol{\Omega}) = k,
\]
for some fixed symmetric $A$ and $k \in \mathbb{R}$. Then $\boldsymbol{\Omega}^\star = (\boldsymbol{\Sigma}^{-1} + \lambda A)^{-1},
$ 
for some scalar $\lambda\in\mathbb{R}$ determined by the constraint. 
\end{proposition}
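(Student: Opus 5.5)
We propose to write the Gaussian KL divergence explicitly and solve the constrained problem with a Lagrange multiplier. Since both Gaussians share the mean $\boldsymbol{\mu}$, the standard formula gives
\[
D_{\mathrm{KL}}\big(\mathcal N(\boldsymbol{\mu},\boldsymbol{\Omega})\,\|\,\mathcal N(\boldsymbol{\mu},\boldsymbol{\Sigma})\big)
=\tfrac12\Big(\operatorname{tr}(\boldsymbol{\Sigma}^{-1}\boldsymbol{\Omega})-n-\log\det\boldsymbol{\Omega}+\log\det\boldsymbol{\Sigma}\Big).
\]
On the open convex cone $\{\boldsymbol{\Omega}\succ0\}$ this objective is strictly convex, because $-\log\det$ is strictly convex and the trace term is linear. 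The constraint $\operatorname{tr}(A\boldsymbol{\Omega})=k$ is affine. Hence any stationary point of the Lagrangian restricted to the feasible set is the unique global minimizer. This is the structural fact we will lean on.

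Next we form the Lagrangian
\[
\mathcal L(\boldsymbol{\Omega},\lambda)=\tfrac12\big(\operatorname{tr}(\boldsymbol{\Sigma}^{-1}\boldsymbol{\Omega})-\log\det\boldsymbol{\Omega}\big)+\tfrac{\lambda}{2}\big(\operatorname{tr}(A\boldsymbol{\Omega})-k\big),
\]
where the multiplier is scaled by $\tfrac12$ for convenience. We differentiate in the space of symmetric matrices, using $\nabla_{\boldsymbol{\Omega}}\operatorname{tr}(M\boldsymbol{\Omega})=M$ for symmetric $M$ and $\nabla_{\boldsymbol{\Omega}}\log\det\boldsymbol{\Omega}=\boldsymbol{\Omega}^{-1}$. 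Stationarity then reads $\boldsymbol{\Sigma}^{-1}-\boldsymbol{\Omega}^{-1}+\lambda A=0$, that is, $\boldsymbol{\Omega}^{-1}=\boldsymbol{\Sigma}^{-1}+\lambda A$. This is exactly the claimed form $\boldsymbol{\Omega}^\star=(\boldsymbol{\Sigma}^{-1}+\lambda A)^{-1}$. Because $A$ is symmetric, the gradient computation on symmetric matrices raises no issue.

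The main obstacle is justifying that a minimizer exists and that the multiplier rule applies. This requires a feasible $\boldsymbol{\Omega}\succ0$, the constraint to be nondegenerate ($A\neq0$), and a minimum attained in the interior rather than escaping to the boundary of the cone or to infinity. We plan to handle this as follows:
\begin{enumerate}
\item \emph{Coercivity.} As $\boldsymbol{\Omega}$ approaches a singular matrix, $-\log\det\boldsymbol{\Omega}\to+\infty$. As $\|\boldsymbol{\Omega}\|\to\infty$, $\operatorname{tr}(\boldsymbol{\Sigma}^{-1}\boldsymbol{\Omega})$ grows linearly and dominates the logarithm. So the objective is coercive on the cone.
\item \emph{Existence.} The feasible set is the intersection of the cone with a closed affine hyperplane. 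Coercivity therefore yields a minimizer in the interior whenever the feasible set is nonempty, which is implicitly assumed in the statement.
\item \emph{Multiplier rule.} The constraint gradient is $A\neq0$, so the constraint qualification holds and a multiplier $\lambda\in\mathbb R$ exists.
\end{enumerate}

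Finally, we note that $\lambda$ is pinned down by the scalar equation $\operatorname{tr}\big(A(\boldsymbol{\Sigma}^{-1}+\lambda A)^{-1}\big)=k$, restricted to the interval where $\boldsymbol{\Sigma}^{-1}+\lambda A\succ0$. The function $g(\lambda)=\operatorname{tr}\big(A(\boldsymbol{\Sigma}^{-1}+\lambda A)^{-1}\big)$ has derivative
\[
g'(\lambda)=-\operatorname{tr}\big((A\boldsymbol{\Omega}_\lambda)^2\big)<0,\qquad \boldsymbol{\Omega}_\lambda:=(\boldsymbol{\Sigma}^{-1}+\lambda A)^{-1}.
\]
So $g$ is strictly decreasing, which confirms that $\lambda$ is uniquely determined by the constraint. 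This step is optional, since strict convexity already gives uniqueness of $\boldsymbol{\Omega}^\star$.
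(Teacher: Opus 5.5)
Your proposal is correct and follows essentially the same route as the paper. Like the paper, you write the Gaussian KL divergence explicitly, use strict convexity of $\operatorname{tr}(\boldsymbol{\Sigma}^{-1}\boldsymbol{\Omega})-\log\det\boldsymbol{\Omega}$ on the positive definite cone, and read off $\boldsymbol{\Omega}^{-1}=\boldsymbol{\Sigma}^{-1}+\lambda A$ from Lagrangian stationarity, while your coercivity and existence argument and the monotonicity of $g(\lambda)$ add rigor the paper omits (it appeals only to strict convexity, which by itself does not guarantee that a minimizer is attained) without changing the approach.
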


\begin{proof}
An explicit formula for the KL divergence between multivariate Gaussians can be found in \cite[(21.171)]{murphy2012machine}. Using this formula,
\[
D_{\mathrm{KL}}\big(\mathcal N(\boldsymbol{\mu},\boldsymbol{\Omega})\,\|\,\mathcal N(\boldsymbol{\mu},\boldsymbol{\Sigma})\big)
= \tfrac12\Big(\operatorname{tr}(\boldsymbol{\Sigma}^{-1}\boldsymbol{\Omega}) - \log\det(\boldsymbol{\Sigma}^{-1}\boldsymbol{\Omega}) - n\Big).
\]
Thus, it is sufficient to minimize $f(\boldsymbol{\Omega}) = \operatorname{tr}(\boldsymbol{\Sigma}^{-1}\boldsymbol{\Omega}) - \log\det\boldsymbol{\Omega}$, which is strictly convex on $\{\boldsymbol{\Omega}\succ0\}$. We form the Lagrangian
\[
\mathcal{L}(\boldsymbol{\Omega},\lambda) = \operatorname{tr}(\boldsymbol{\Sigma}^{-1}\boldsymbol{\Omega}) - \log\det\boldsymbol{\Omega} + \lambda\big(\operatorname{tr}(A\boldsymbol{\Omega}) - k\big).
\]
Using the matrix identities $\partial_{\boldsymbol{\Omega}}\operatorname{tr}(\boldsymbol{\Sigma}^{-1}\boldsymbol{\Omega}) = \boldsymbol{\Sigma}^{-1}$ and $\partial_{\boldsymbol{\Omega}}\log\det\boldsymbol{\Omega} = \boldsymbol{\Omega}^{-1}$, the stationarity condition $\partial_{\boldsymbol{\Omega}}\mathcal{L}=0$ gives
\[
\boldsymbol{\Sigma}^{-1} - \boldsymbol{\Omega}^{-1} + \lambda A = 0 \quad\Longleftrightarrow\quad \boldsymbol{\Omega}^{-1} = \boldsymbol{\Sigma}^{-1} + \lambda A.
\]
Since $f$ is strictly convex, this stationary point is the unique minimizer whenever the constraint is feasible, completing the proof.
\end{proof}

Now, let $P=\mathcal{N}(\boldsymbol{\mu}, \boldsymbol{\Sigma})$ be our estimate and $Q=\mathcal{N}(\boldsymbol{\mu}, \boldsymbol{\Omega})$ be the true future distribution of asset returns. As for mean views, we call a covariance view as directionally correct if the forecast is of the form $\boldsymbol{\Omega}(t)=\left(\boldsymbol{\Sigma}^{-1}+t\Delta\right)^{-1}$ for some $t \in [0,1]$, where $\Delta:=\boldsymbol{\Omega}^{-1}-\boldsymbol{\Sigma}^{-1}$.

By Proposition~\ref{prop:sharpe-angle} the optimal allocation has weights $\mathbf{w}(t)\propto \boldsymbol{\Omega}(t)^{-1}\boldsymbol{\mu}$. Let $S(t) =\operatorname{S}_{\boldsymbol{\mu},\boldsymbol{\Sigma}} (\boldsymbol{\mu}, \boldsymbol{\Omega}(t))$. The following analogue of Theorem~\ref{thm:mean-improvement} exists in this setting.

\begin{theorem}[Monotonicity and concavity]
\label{thm:cov-improvement}
Let the Sharpe realized under the true distribution
$\mathcal N(\boldsymbol\mu,\boldsymbol\Omega)$, of the tangency portfolio
$\mathbf w(t)\propto\boldsymbol\Omega(t)^{-1}\boldsymbol\mu$, be denoted by
\[
    S(t) := S_{\boldsymbol\mu,\boldsymbol\Omega}\big(\boldsymbol\mu,\boldsymbol\Omega(t)\big)
\]
Then $S(t)$ is nondecreasing on $[0,1]$. Further, if $\mathbf p$ and $\mathbf q$ are not collinear, then $S(t)$ is strictly
increasing on $[0,1]$. Moreover, $S(t)$ is concave on
$[0,1]$ if and only if $A+3B\ge0$; in particular $B\ge0$ is sufficient
for concavity.
\end{theorem}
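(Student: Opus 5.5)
Our plan is to reduce Theorem~\ref{thm:cov-improvement} to the computation in the proof of Theorem~\ref{thm:mean-improvement}, via a change of variables that turns the covariance path into a linear interpolation. The theorem does not define $\mathbf p$, $\mathbf q$, $A$, $B$ explicitly, so we will supply the natural definitions. Since $\boldsymbol\Omega(t)^{-1}=\boldsymbol\Sigma^{-1}+t\Delta=(1-t)\boldsymbol\Sigma^{-1}+t\boldsymbol\Omega^{-1}$, the weight vector is
\[
\mathbf w(t)\propto \boldsymbol\Omega(t)^{-1}\boldsymbol\mu=(1-t)\mathbf p+t\mathbf q,\qquad \mathbf p:=\boldsymbol\Sigma^{-1}\boldsymbol\mu,\quad \mathbf q:=\boldsymbol\Omega^{-1}\boldsymbol\mu.
\]
So the weights move along a straight segment from the current allocation $\mathbf p$ to the true tangency allocation $\mathbf q$. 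The true distribution is $\mathcal N(\boldsymbol\mu,\boldsymbol\Omega)$.

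\textbf{Step 1 (rewrite $S(t)$).} Evaluating \eqref{eq:SR} under the true distribution gives
\[
S(t)=\frac{\boldsymbol\mu^\top\mathbf w(t)}{\sqrt{\mathbf w(t)^\top\boldsymbol\Omega\,\mathbf w(t)}}.
\]
Since $\boldsymbol\mu=\boldsymbol\Omega\mathbf q$, we have $\boldsymbol\mu^\top\mathbf w=\langle\mathbf q,\mathbf w\rangle_{\boldsymbol\Omega}$. Hence, in the inner product $\langle\cdot,\cdot\rangle_{\boldsymbol\Omega}$,
\[
S(t)=\frac{\langle\mathbf q,\mathbf w(t)\rangle_{\boldsymbol\Omega}}{\|\mathbf w(t)\|_{\boldsymbol\Omega}}.
\]

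\textbf{Step 2 (match the mean-view formula).} Write $\mathbf w(t)=\mathbf p+t(\mathbf q-\mathbf p)$ and set
\[
A:=\langle\mathbf p,\mathbf p\rangle_{\boldsymbol\Omega},\qquad B:=\langle\mathbf p,\mathbf q-\mathbf p\rangle_{\boldsymbol\Omega},\qquad C:=\langle\mathbf q-\mathbf p,\mathbf q-\mathbf p\rangle_{\boldsymbol\Omega}.
\]
Then $\langle\mathbf q,\mathbf w(t)\rangle_{\boldsymbol\Omega}=A+B+t(B+C)$ and $\|\mathbf w(t)\|^2_{\boldsymbol\Omega}=A+2tB+t^2C$. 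This is exactly the rational function analysed in the proof of Theorem~\ref{thm:mean-improvement}, with $\boldsymbol\nu\mapsto\mathbf p$, $\boldsymbol\mu\mapsto\mathbf q$ and $\boldsymbol\Sigma^{-1}\mapsto\boldsymbol\Omega$. The earlier computations therefore transfer verbatim:
\[
S'(t)=\frac{(1-t)(AC-B^2)}{(A+2tB+t^2C)^{3/2}},\qquad S''(t)=-\frac{(AC-B^2)H(t)}{(A+2tB+t^2C)^{5/2}},
\]
where $H(t)=A+3B+(3C-B)t-2Ct^2$. The denominators are positive as long as $\mathbf w(t)\neq0$ on $[0,1]$. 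We will check that this holds automatically: $\mathbf w(t)=\boldsymbol\Omega(t)^{-1}\boldsymbol\mu$, and $\boldsymbol\Omega(t)^{-1}$ is a convex combination of positive definite matrices, hence invertible, so $\mathbf w(t)\neq0$ whenever $\boldsymbol\mu\neq0$.

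\textbf{Step 3 (conclude).} Cauchy--Schwarz in $\langle\cdot,\cdot\rangle_{\boldsymbol\Omega}$ gives $AC\ge B^2$, with equality iff $\mathbf p$ and $\mathbf q-\mathbf p$ are collinear, i.e.\ iff $\mathbf p$ and $\mathbf q$ are collinear. This yields monotonicity, and strict monotonicity on $[0,1)$ in the non-collinear case; the vanishing of $S'$ at $t=1$ alone does not spoil strict increase on $[0,1]$.

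For concavity, $H$ is a concave quadratic because $C\ge0$. At the right endpoint $H(1)=A+2B+C=\|\mathbf q\|^2_{\boldsymbol\Omega}>0$. A concave function on $[0,1]$ attains its minimum at an endpoint, so $H\ge0$ on $[0,1]$ iff $H(0)=A+3B\ge0$.

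\begin{itemize}
\item In the non-collinear case, $S''$ has the sign of $-H$, so $S$ is concave on $[0,1]$ iff $A+3B\ge0$.
\item In the collinear case, $S$ is constant and hence trivially concave. We will note this as a degenerate case, in which the ``only if'' direction should be read as vacuous.
\end{itemize}

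Since $A\ge0$, the condition $B\ge0$ suffices for concavity.

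\textbf{Main obstacle.} The main obstacle is interpretive rather than computational: identifying the correct $\mathbf p$, $\mathbf q$, $A$, $B$ from the statement, and recognising that $\boldsymbol\mu=\boldsymbol\Omega\mathbf q$ turns the numerator into an $\boldsymbol\Omega$-inner product with $\mathbf q$. That identification is what makes the reduction to Theorem~\ref{thm:mean-improvement} exact. The derivative formulas themselves are routine quotient-rule computations already done there.
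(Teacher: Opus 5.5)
Your proposal is correct and follows the paper's own proof: you use the same $\mathbf p,\mathbf q,A,B,C$, the identity $\boldsymbol\mu=\boldsymbol\Omega\mathbf q$ to write the numerator as an $\boldsymbol\Omega$-inner product with $\mathbf q$, and you reuse the derivative formulas from Theorem~\ref{thm:mean-improvement}. You also add two details the paper omits: $\boldsymbol\Omega(t)^{-1}=(1-t)\boldsymbol\Sigma^{-1}+t\boldsymbol\Omega^{-1}\succ0$ holds automatically, and you make the endpoint argument for $H$ explicit.

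One correction: in the collinear case the ``only if'' is not vacuous but false. For example, $\boldsymbol\Omega=2\boldsymbol\Sigma$ gives $\mathbf q=\mathbf p/2$, so $S$ is constant (hence concave) while $A+3B=-A/2<0$. The equivalence should therefore be stated under non-collinearity, which is exactly the case your argument proves.
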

Prior to proving the theorem, we introduce the notation
$ 
    A:=\langle\mathbf p,\mathbf p\rangle_{\boldsymbol\Omega},
    \mkern7mu
    B:=\langle\mathbf p,\mathbf q-\mathbf p\rangle_{\boldsymbol\Omega},
    \mkern7mu
    C:=\langle\mathbf q-\mathbf p,\mathbf q-\mathbf p\rangle_{\boldsymbol\Omega},
$
where $ \mkern7mu
    \mathbf p:=\boldsymbol\Sigma^{-1}\boldsymbol\mu,
    \mkern7mu
    \mathbf q:=\boldsymbol\Omega^{-1}\boldsymbol\mu,
    \mkern7mu
    \langle\mathbf x,\mathbf y\rangle_{\boldsymbol\Omega}
    :=\mathbf x^\top\boldsymbol\Omega\mathbf y.
 $ 

\begin{proof}
Since $\Delta\boldsymbol\mu=\boldsymbol\Omega^{-1}\boldsymbol\mu-
\boldsymbol\Sigma^{-1}\boldsymbol\mu=\mathbf q-\mathbf p$, the defining
relation for $\boldsymbol\Omega(t)^{-1}$ gives
\[
    \boldsymbol\Omega(t)^{-1}\boldsymbol\mu
    =
    \boldsymbol\Sigma^{-1}\boldsymbol\mu+t\Delta\boldsymbol\mu
    =
    \mathbf p+t(\mathbf q-\mathbf p)
    =:
    \mathbf p_t,
\]
This is exactly the structure
of $\boldsymbol\nu_\lambda$ in Theorem~\ref{thm:mean-improvement}. Assuming
$\boldsymbol\Omega(t)\succ0$ and $\boldsymbol\mu\neq\mathbf 0$, we have
$\mathbf p_t=\boldsymbol\Omega(t)^{-1}\boldsymbol\mu\neq\mathbf 0$ for every
$t\in[0,1]$. By \eqref{eq:misspecified-sharpe},
\[
    S(t)
    =
    \frac{\boldsymbol\mu^\top\mathbf p_t}
    {\sqrt{\mathbf p_t^\top\boldsymbol\Omega\mathbf p_t}}.
\]
Using $\boldsymbol\Omega\mathbf q=\boldsymbol\mu$, the numerator satisfies
$\boldsymbol\mu^\top\mathbf p_t=\mathbf q^\top\boldsymbol\Omega\mathbf p_t
=\langle\mathbf q,\mathbf p_t\rangle_{\boldsymbol\Omega}$,
\[
    S(t)
    =
    \frac{\langle\mathbf q,\mathbf p_t\rangle_{\boldsymbol\Omega}}
    {\sqrt{\langle\mathbf p_t,\mathbf p_t\rangle_{\boldsymbol\Omega}}}
    =
    \frac{A+B+t(B+C)}
    {\bigl(A+2tB+t^2C\bigr)^{1/2}},
\]
where the second equality follows from expanding
$\mathbf p_t=\mathbf p+t(\mathbf q-\mathbf p)$ in the inner product
$\langle\cdot,\cdot\rangle_{\boldsymbol\Omega}$ and using the definitions
of $A,B,C$. Differentiating,
\[
    S'(t)
    =
    \frac{(1-t)(AC-B^2)}
    {\bigl(A+2tB+t^2C\bigr)^{3/2}}.
\]

This is exactly the expression for $S'(\lambda)$ obtained in the proof of Theorem~\ref{thm:mean-improvement}. Thus, the same analysis follows, giving us monotonicity of the Sharpe ratio as a function of $t$, and concavity iff $A+3B \ge 0$.
\end{proof}

\subsection{Proof of Theorem~\ref{prop:srd}}
\label{appendix:perturbation}

\begin{proof}[Proof of Theorem~\ref{prop:srd}]
Suppose that $\boldsymbol{\mu},\boldsymbol{\nu}, \boldsymbol{\Sigma},\boldsymbol{\Omega},m$ are as in the statement of the theorem. By Proposition~\ref{prop:sharpe-angle} we have \begin{equation}\label{eq1}
\frac{S^*_{\boldsymbol{\mu}, \boldsymbol{\Sigma}} - S_{\boldsymbol{\mu}, \boldsymbol{\Sigma}}(\boldsymbol{\nu},\boldsymbol{\Omega})}{S^*_{\boldsymbol{\mu}, \boldsymbol{\Sigma}}} = 1- \cos \theta_{\boldsymbol{\mu},\boldsymbol{\Sigma}}(\boldsymbol{\nu}, \boldsymbol{\Omega}),
\end{equation}
where\begin{equation}\label{eq2}
\cos
\theta_{\boldsymbol{\mu},\boldsymbol{\Sigma}}
(\boldsymbol{\nu},\boldsymbol{\Omega})
=
\frac{
    \boldsymbol{\mu}^{\top}\boldsymbol{\Omega}^{-1}\boldsymbol{\nu}
}{
    \sqrt{
        \boldsymbol{\mu}^{\top}
        \boldsymbol{\Sigma}^{-1}\boldsymbol{\mu}
    }
    \sqrt{
        \boldsymbol{\nu}^{\top}
        \boldsymbol{\Omega}^{-1}
        \boldsymbol{\Sigma}
        \boldsymbol{\Omega}^{-1}\boldsymbol{\nu}
    }
}.
\end{equation}
The left side of the inequality in the theorem follows directly from Proposition~\ref{prop:sharpe-angle}. Thus, we must only prove the right hand side.

To do this, we start by noticing that if we take $\mathbf{x} = \boldsymbol{\Sigma}^{-\frac 12} \boldsymbol{\mu}$, $\mathbf{y} = \boldsymbol{\Sigma}^{-\frac 12} \boldsymbol{\nu}$ and $A = \boldsymbol{\Sigma}^{-\frac 12}\boldsymbol{\Omega}\boldsymbol{\Sigma}^{-\frac 12}$, then 
\begin{equation}\label{eq3}
1-\frac{
    \boldsymbol{\mu}^{\top}\boldsymbol{\Omega}^{-1}\boldsymbol{\nu}
}{
    \sqrt{
        \boldsymbol{\mu}^{\top}
        \boldsymbol{\Sigma}^{-1}\boldsymbol{\mu}
    }
    \sqrt{
        \boldsymbol{\nu}^{\top}
        \boldsymbol{\Omega}^{-1}
        \boldsymbol{\Sigma}
        \boldsymbol{\Omega}^{-1}\boldsymbol{\nu}
    }
} = 1-\frac{\mathbf{x}^T A^{-1} \mathbf{y}}{\|\mathbf{x}\| \|A^{-1}\mathbf{y}\|} = \frac 12\left\|\frac{\mathbf{x}}{\|\mathbf{x}\|} - \frac{A^{-1}\mathbf{y}}{\|A^{-1}\mathbf{y}\|}\right\|^2.
\end{equation}
Now, for any two non-zero vectors $\mathbf{u},\mathbf{v}$ we have the inequality $$
\left\|\frac{\mathbf{u}}{\|\mathbf{u}\|} - \frac{\mathbf{v}}{\|\mathbf{v}\|}\right\|^2 \leq \frac{4\|\mathbf{u}-\mathbf{v}\|^2}{\|\mathbf{u}\|^2}.
$$
Using this inequality with $\mathbf{u} = \mathbf{x}, \mathbf{v} = A^{-1}\mathbf{y}$ in \eqref{eq3} we have 
\begin{equation}\label{eq4}
1-\frac{
    \boldsymbol{\mu}^{\top}\boldsymbol{\Omega}^{-1}\boldsymbol{\nu}
}{
    \sqrt{
        \boldsymbol{\mu}^{\top}
        \boldsymbol{\Sigma}^{-1}\boldsymbol{\mu}
    }
    \sqrt{
        \boldsymbol{\nu}^{\top}
        \boldsymbol{\Omega}^{-1}
        \boldsymbol{\Sigma}
        \boldsymbol{\Omega}^{-1}\boldsymbol{\nu}
    }
} \leq 2 \frac{\|A^{-1}\mathbf{y} - \mathbf{x}\|^2}{\|\mathbf{x}\|^2} = \frac{\|\boldsymbol{\Sigma}^{\frac 12}\boldsymbol{\Omega}^{-1}\boldsymbol{\nu} - \boldsymbol{\Sigma}^{-\frac 12}\boldsymbol{\mu}\|^2}{\boldsymbol{\mu}^T \boldsymbol{\Sigma}^{-1} \boldsymbol{\mu}}.
\end{equation}
We shall first upper bound the norm in the numerator, by the triangle inequality and the definition of the operator norm $\|\cdot\|_{op}$, which we recall from above the statement of this theorem.
\begin{align}
\|\boldsymbol{\Sigma}^{\frac 12}\boldsymbol{\Omega}^{-1}\boldsymbol{\nu} - \boldsymbol{\Sigma}^{-\frac 12}\boldsymbol{\mu}\| \leq & \|\boldsymbol{\Sigma}^{\frac 12}\boldsymbol{\Omega}^{-1}\boldsymbol{\nu} - \boldsymbol{\Sigma}^{\frac 12}\boldsymbol{\Omega}^{-1}\boldsymbol{\mu}\| + \| \boldsymbol{\Sigma}^{\frac 12}\boldsymbol{\Omega}^{-1}\boldsymbol{\mu}-\boldsymbol{\Sigma}^{-\frac 12}\boldsymbol{\mu}\| \nonumber \\
\leq &  \|\boldsymbol{\Sigma}^{\frac 12}\|_{op}\|\boldsymbol{\Omega}^{-1}\|_{op}\|\boldsymbol{\nu} -\boldsymbol{\mu}\| +  \|\boldsymbol{\Sigma}^{\frac 12}\|_{op}\|\boldsymbol{\Omega}^{-1} -\boldsymbol{\Sigma}^{-1} \|_{op} \|\boldsymbol{\mu}\| \label{eq5}
\end{align}
Now, we apply the assumptions of the theorem to continue this bound. First, since $\boldsymbol{\Sigma} -mI\succ 0$, we have $\|\boldsymbol{\Sigma}^{-1}\|_{op} \leq \frac 1m$. Then, since $\|\boldsymbol{\Sigma} - \boldsymbol{\Omega}\|_{op}< \frac m2$, we obtain $\frac m2I - (\boldsymbol{\Sigma}-\boldsymbol{\Omega}) \succ 0$. Adding the two matrix inequalities, we obtain that $\boldsymbol{\Omega} - \frac m2I \succ 0$, which implies that $\|\boldsymbol{\Omega}^{-1}\|_{op} \leq \frac 2m$. Next, observe that $$
\|\boldsymbol{\Omega}^{-1} -\boldsymbol{\Sigma}^{-1} \|_{op} = \|\boldsymbol{\Sigma}^{-1}(\boldsymbol{\Omega} -\boldsymbol{\Sigma})\boldsymbol{\Omega}^{-1} \|_{op} \leq \|\boldsymbol{\Sigma}^{-1}\|_{op}\|\boldsymbol{\Omega} -\boldsymbol{\Sigma}\|_{op}\|\boldsymbol{\Omega}^{-1}\|_{op} \leq \frac 2{m^2}\|\boldsymbol{\Omega} -\boldsymbol{\Sigma}\|_{op}. 
$$
Putting these inequalities into \eqref{eq5} we have $$
\|\boldsymbol{\Sigma}^{\frac 12}\boldsymbol{\Omega}^{-1}\boldsymbol{\nu} - \boldsymbol{\Sigma}^{-\frac 12}\boldsymbol{\mu}\| \leq \frac {2}{m}\|\boldsymbol{\Sigma}^{\frac 12}\|_{op}\|\boldsymbol\mu-  \boldsymbol\nu\| + \frac 2{m^2}\|\boldsymbol{\Sigma}^{\frac 12}\|_{op} \|\boldsymbol\mu\|\|\boldsymbol{\Omega} -\boldsymbol{\Sigma}\|_{op}.
$$
We plug the above into \eqref{eq4}, and note that in \eqref{eq4}, the denominator is lower bounded: $$\boldsymbol{\mu}^T\boldsymbol{\Sigma}^{-1} \boldsymbol{\mu} = \|\boldsymbol\Sigma^{-\frac 12} \boldsymbol\mu\|^2 \geq \frac{\|\boldsymbol{\mu}\|^2}{\|\boldsymbol{\Sigma}^{\frac 12}\|_{op}^2}.$$ Thus, using the fact that $\|\boldsymbol{\mu}\|\geq m$,
$$
1-\frac{
    \boldsymbol{\mu}^{\top}\boldsymbol{\Omega}^{-1}\boldsymbol{\nu}
}{
    \sqrt{
        \boldsymbol{\mu}^{\top}
        \boldsymbol{\Sigma}^{-1}\boldsymbol{\mu}
    }
    \sqrt{
        \boldsymbol{\nu}^{\top}
        \boldsymbol{\Omega}^{-1}
        \boldsymbol{\Sigma}
        \boldsymbol{\Omega}^{-1}\boldsymbol{\nu}
    }
} \leq \frac{2\|\boldsymbol{\Sigma}^{\frac 12}\|_{op}^4}{m^2} \left(\|\boldsymbol\mu-\boldsymbol\nu\|+ \|\boldsymbol \Omega - \boldsymbol\Sigma\|_{op}\right)^2.
$$
This instantly concludes the proof in light of \eqref{eq1} and \eqref{eq2}.
\end{proof}

\section{Experiments}
\label{appendix:experiments}

This appendix collects supporting material for the experiments in
\S~\ref{sec:end-to-end}. \S~\ref{appendix:empiricalest} details the construction of the
historical measure used to form our initial view on returns, \S~\ref{appendix:data}
describes the equity universe and data underlying both experiments, and
\S~\ref{appendix:experimentresults} reports the full performance statistics and
portfolio allocations summarized in Figure~\ref{fig:performance-comparison}.

\subsection{Historical Measure}\label{appendix:empiricalest}

Let $\mathbf r_i \in \mathbb [-1, \infty)^n$, $i = 1,\ldots,T$, denote the return vectors for $n$
assets observed over a historical period of $T$ days. For these observations, we take $P$ to be the historical measure on $\mathbf
r_1,\ldots,\mathbf r_T$, defined for any subset $A \subset \mathbb R^n$ by
\[
P(A) = \frac{1}{T} \sum_{i=1}^T \mathbf{1}_{\mathbf r_i \in A},
\]
i.e., a uniform prior over the historical returns. If $\mathbf r$ denotes the random
return vector under $P$, then $\mathbb E_P[\mathbf r] = \frac{1}{T}\sum_{i=1}^T
\mathbf r_i$.

\subsection{Equity Universe and Data}
\label{appendix:data}

We restrict to firms with a closing price available on every trading date in the sample window from 1 January 2006 through 31 December 2025; since 1 January 2006 was a market holiday, the first observed trading date is 2 January 2006. This yields a common panel of 4,958 trading dates across the eligible firms. From this eligible pool, we rank firms by market capitalization as of 31 December 2023 and select the top 20.

The resulting universe consists of \texttt{RELIANCE}, \texttt{TCS}, \texttt{HDFCBANK}, \texttt{AXISBANK}, \texttt{INFY}, \texttt{HINDUNILVR}, \texttt{BHARTIARTL}, \texttt{ITC}, \texttt{SBIN}, \texttt{LT}, \texttt{HCLTECH}, \texttt{KOTAKBANK}, \texttt{ICICIBANK}, \texttt{ASIANPAINT}, \texttt{NTPC}, \texttt{ADANIENT}, \texttt{MARUTI}, \texttt{SUNPHARMA}, \texttt{ULTRACEMCO}, , and \texttt{TITAN}.

This also serves as the Dataset 2 in Figure 1. However, Dataset 1 contains \texttt{BHARTIARTL}, \texttt{KOTAKBANK}, \texttt{SBIN}, \texttt{MARUTI} trained on data from 1 January 2020 to 31 December, 2023. Dataset 1 aims to serve as an illustration of the theoretical concavity result when the necessary conditions hold.

\subsection{Portfolio Allocations and Performance Statistics}
\label{appendix:experimentresults}

Figure~\ref{fig:portfolio-composition} compares the Markowitz and mean LSTM allocations over 2024--2025 for three representative views, $\lambda=0$, $0.5$, and $1$. The Markowitz allocation is constant by construction, while the LSTM allocation can vary from day to day. With the historical view ($\lambda=0$), the LSTM makes substantial changes over time. As the view becomes more informative, these changes largely disappear and the LSTM converges toward an almost constant allocation that closely resembles the corresponding Markowitz portfolio. At $\lambda=1$, both portfolios are concentrated primarily in \texttt{BHARTIARTL}, \texttt{MARUTI}, and \texttt{SUNPHARMA}.

Table~\ref{tab:performance-by-lambda} reports the full set of performance statistics underlying Figure~\ref{fig:performance-comparison} for the Indian equity universe described in the previous section, with LSTM entries averaged across 100 seeds and net metrics computed after 20 bps one-way transaction costs.

\begin{table*}[h]
\centering
\caption{Performance by $\lambda.$}
\label{tab:performance-by-lambda}
\small
\setlength{\tabcolsep}{3.5pt}
\begin{tabular}{@{}llrrrrrrr@{}}
\toprule
$\lambda$ & Model & Gross SR & Net SR & Sortino & Net Vol. & Net MDD & Final wealth & Daily turnover \\
\midrule
0.0 & Markowitz & 0.050 & 0.019 & 0.029 & 13.3\% & -25.0\% & 0.988 & 0.0081 \\0.0 & LSTM & 0.037 & -0.021 & -0.033 & 13.2\% & -24.7\% & 0.978 & 0.0153 \\
\addlinespace[2pt]
0.1 & Markowitz & 0.235 & 0.204 & 0.306 & 13.1\% & -23.7\% & 1.037 & 0.0082 \\
0.1 & LSTM & 0.220 & 0.150 & 0.229 & 13.0\% & -23.1\% & 1.022 & 0.0180 \\
\addlinespace[2pt]
0.2 & Markowitz & 0.670 & 0.638 & 0.949 & 13.0\% & -21.4\% & 1.159 & 0.0082 \\
0.2 & LSTM & 0.639 & 0.562 & 0.839 & 13.3\% & -20.9\% & 1.139 & 0.0200 \\
\addlinespace[2pt]
0.3 & Markowitz & 1.059 & 1.027 & 1.546 & 12.9\% & -19.4\% & 1.279 & 0.0082 \\
0.3 & LSTM & 0.959 & 0.885 & 1.304 & 13.5\% & -19.1\% & 1.243 & 0.0197 \\
\addlinespace[2pt]
0.4 & Markowitz & 1.456 & 1.426 & 2.172 & 13.3\% & -16.9\% & 1.430 & 0.0079 \\
0.4 & LSTM & 1.297 & 1.235 & 1.819 & 13.7\% & -17.2\% & 1.371 & 0.0168 \\
\addlinespace[2pt]
0.5 & Markowitz & 1.801 & 1.774 & 2.734 & 14.1\% & -14.1\% & 1.609 & 0.0075 \\
0.5 & LSTM & 1.733 & 1.691 & 2.580 & 14.0\% & -14.4\% & 1.565 & 0.0118 \\
\addlinespace[2pt]
0.6 & Markowitz & 1.902 & 1.877 & 2.905 & 14.7\% & -13.4\% & 1.688 & 0.0073 \\
0.6 & LSTM & 1.868 & 1.836 & 2.826 & 14.5\% & -13.5\% & 1.659 & 0.0092 \\
\addlinespace[2pt]
0.7 & Markowitz & 1.946 & 1.923 & 2.980 & 15.1\% & -13.1\% & 1.738 & 0.0071 \\
0.7 & LSTM & 1.908 & 1.881 & 2.904 & 14.9\% & -13.2\% & 1.702 & 0.0082 \\
\addlinespace[2pt]
0.8 & Markowitz & 1.953 & 1.931 & 2.992 & 15.4\% & -13.4\% & 1.758 & 0.0069 \\
0.8 & LSTM & 1.924 & 1.898 & 2.931 & 15.2\% & -13.4\% & 1.728 & 0.0078 \\
\addlinespace[2pt]
0.9 & Markowitz & 1.957 & 1.935 & 2.995 & 15.7\% & -13.6\% & 1.777 & 0.0068 \\
0.9 & LSTM & 1.932 & 1.907 & 2.940 & 15.5\% & -13.6\% & 1.750 & 0.0077 \\
\addlinespace[2pt]
1.0 & Markowitz & 1.957 & 1.936 & 2.987 & 16.0\% & -13.8\% & 1.796 & 0.0066 \\
1.0 & LSTM & 1.936 & 1.912 & 2.937 & 15.7\% & -13.8\% & 1.769 & 0.0074 \\
\bottomrule
\end{tabular}
\end{table*}

\begin{figure*}[p]
    \centering
    \includegraphics[width=0.94\textwidth]
    {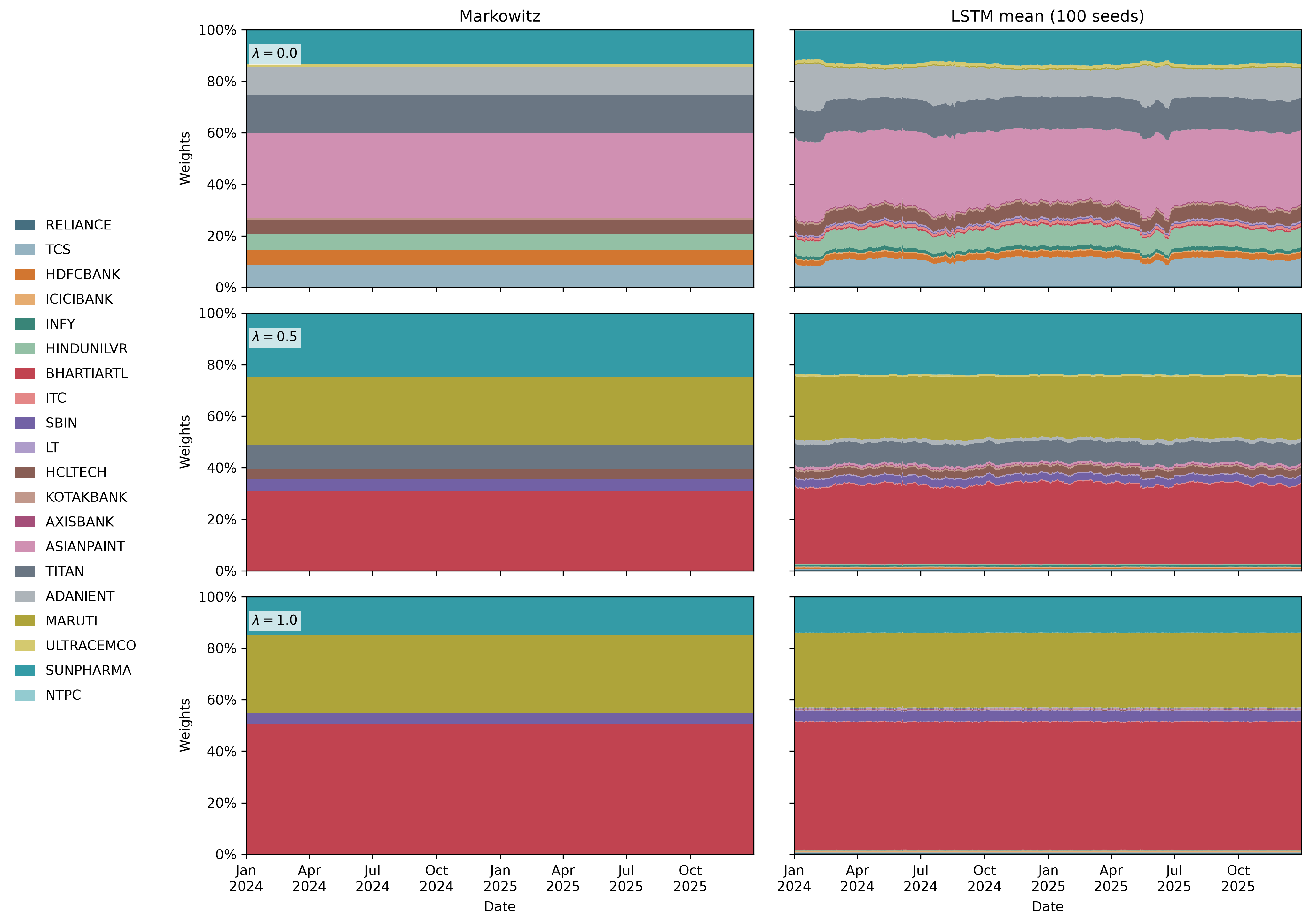}

    \caption{
    Portfolio allocations over 2024--2025 for
    $\lambda=0$, $0.5$, and $1$.
    The left column shows the Markowitz target allocations, while the
    right column shows the mean LSTM allocations across 100 random seeds.
    }
    \label{fig:portfolio-composition}
\end{figure*}

\end{document}